
\documentclass{article}
\usepackage{eurosym}
\usepackage[toc,page]{appendix}
\usepackage{amsfonts}
\usepackage{amsmath}
\usepackage[top=2cm, bottom=2cm, left=2.75cm, right=2.75cm]{geometry}
\usepackage[colorlinks=true, linkcolor=red, citecolor=blue]{hyperref}

\setcounter{MaxMatrixCols}{10}

\makeatletter

\@addtoreset{equation}{section}
\makeatother
\newtheorem{theorem}{Theorem}

\newtheorem{lemma}[theorem]{Lemma}

\newtheorem{remark}[theorem]{Remark}

\newenvironment{proof}[1][\textit{Proof}]{\noindent\textbf{#1.} }{\ \rule{0.5em}{0.5em}}
\begin{document}

\title{Heat coefficients for magnetic Laplacians on the complex projective
space $\mathbf{P}^{n}(\mathbb{C})$}
\author{K. Ahbli${}^{*}$, A. Hafoud${}^\sharp$, Z. Mouayn${}^{\flat}$}
\maketitle

\begin{abstract}
Denoting by $\Delta_\nu$ the Fubini-Study Laplacian perturbed by a uniform magnetic field strength proportional to $\nu$, this operator has a discrete spectrum consisting on eigenvalues $\beta_m, \ m\in\mathbb{Z}_+$, when acting on bounded functions of the complex projective $n$-space. For the corresponding eigenspaces, we give a new proof for their reproducing kernels by using Zaremba's expansion directly. These kernels are then used to obtain an integral representation for the heat kernel of $\Delta_\nu$. Using a suitable polynomial decomposition of the multiplicity of each $\beta_m$, we write down a trace formula for the heat operator associated with $\Delta_\nu$ in terms of Jacobi's theta functions and their higher order derivatives. Doing so enables us to establish the asymptotics of this trace as $t\searrow 0^+$ by giving the corresponding heat coefficients in terms of Bernoulli numbers and polynomials. The obtained results can be exploited in the analysis of the spectral zeta function associated with $\Delta_\nu$.
\end{abstract}

\section{Introduction}
Let $\Delta$ denote the Laplace-Beltrami operator on a Riemannian manifold $(\mathcal{M},g)$ of dimension $d$, given by 
\begin{equation}
\Delta=-\frac{1}{\sqrt{\det g}}\sum\limits_{i,j}\partial_i\left(g^{ij}\sqrt{\det g}\partial_j\right).
\end{equation}
 The heat kernel is a function $H(x,y,t)\in \mathcal{C}^{\infty}(\mathcal{M}\times \mathcal{M}\times \mathbb{R}_{+})$ that solves the problem 
\begin{equation}
\left( \partial_t +\Delta\right)H=0 \quad \text{ and } \quad \lim\limits_{t\to 0}\int_{\mathcal{M}}H(x,y,t)f(y)dy=f(x)
\end{equation} 
 for any smooth function $f$ of compact support. For $\mathcal{M}$ compact, there exists a complete orthonormal basis $\{\phi_k\}$ in $L^2(\mathcal{M})$, consisting of eigenfunctions of $\Delta$, associated with eigenvalues $0\leq \lambda_0 \leq \lambda_1, \cdots $, such that 
\begin{equation}
H(x,y,t)=\sum\limits_{k=0}^{+\infty}e^{-\lambda_k t}\phi_k(x)\phi_k(y)
\end{equation}
with uniformly convergence for any fixed $t>0$, see \cite{Da1989} for the general theory. The asymptotic expansion $H(x,y,t)$ had be studied by Minakshisudaran and Pleijel \cite{MP1949} and one has \cite{Mi1953}: 
\begin{equation}
H(x,x,t)\sim \frac{1}{(2\sqrt{\pi t})^d} (1+a_1(x)t+a_2(x)t^2+\cdots),\quad t \searrow 0^{+}
\end{equation}
which, after being integrated with respect to the volume form, gives the trace formula
\begin{equation}\label{eq1.3}
\displaystyle\sum_{k=0}^{+\infty} e^{-t\lambda_k} \sim \frac{1}{(2\sqrt{\pi t})^d} (a_0+a_1t+a_2t^2+\cdots),\quad t \searrow  0^{+}.
\end{equation}
The numbers $a_k$ in \eqref{eq1.3} are called \textit{heat coefficients} and may be expressed in terms of geometrical invariants of $\mathcal{M}$. Indeed, $a_0$ is the volume of $\mathcal{M}$, and for $d=2$, $a_1$ is proportional to the Euler characteristic of $\mathcal{M}$. In general, $\{a_k\}$ depend on the curvature tensor $R$ and its successive covariant derivatives meaning that it is not easy to determine them. Indeed, few of these coefficients have been calculated for a general manifold \cite{BGM1971} while in the case of compact symmetric spaces of rank one they can be found in \cite{CW_1976}.

For $\mathcal{M}=\mathbf{P}^{n}(\mathbb{C})$ the complex projective $n$-space which is a real manifold of dimension $2n$ representing the prototype of rank-one complex Riemannian symmetric space of compact-type, the computation of the heat coefficients was discussed in \cite{Byt99} through the spectral zeta functions and via the resolvent of the Laplacian operator in \cite{Po02}. Other methods can be found in \cite{CoVZe} and for further discussion on these coefficients we refer to \cite{BrGi}. 

In \cite{Aw2019} the author has examined the role played by the coefficients $a_k^{(n)}\equiv a_k\left(\mathbf{P}^{n}(\mathbb{C})\right)$ in describing a new class of heat coefficients and then introduced the associated zeta function. He also proved that for the Fubini-Study Laplacian 
\begin{equation}  \label{FSL}
\Delta _{0}:=4(1+\langle z,z\rangle )\sum_{i,j=1}^{n}(\delta _{ij}+z_{i}\bar{%
z}_{j})\frac{\partial ^{2}}{\partial z_{i}\partial \bar{z}_{j}},
\end{equation}
the expansion of the trace of the heat operator
\begin{equation}
\text{Tr}\left(e^{-t\Delta_{0}}\right)\sim \frac{1}{(4\pi t)^n}\sum\limits_{k=0}^{+\infty}a_k^{(n)}t^k, \quad t\searrow 0^+, 
\end{equation}
 can be expressed purely in terms of Jacobi theta functions $\vartheta_2(t)$ and $\vartheta_3(t)$ and their higher order derivatives, depending on the cases $n$ odd and $n$ even respectively. A similar discussion involving these special functions already appeared in \cite{HI2002} where the authors have established an integral representation for the heat kernel 
\begin{equation}\label{IRHK_HI}
H_0(x,y,t)=\frac{e^{n^{2}t}}{2^{n-2}\pi ^{n+1}}\int_{\rho }^{\frac{\pi }{%
2}}\frac{-d(\cos u)}{\sqrt{\cos ^{2}\rho -\cos ^{2}u}}\left( -\frac{1}{\sin u%
}\frac{d}{du}\right) ^{n}\left[ \Theta _{n+1}(t,u)\right]
\end{equation}
associated with $\Delta_0$, where $\rho =d_{FS}(x,y)$ denotes the Fubini-Study distance (given by \eqref{d_FS} below) and
\begin{equation*}
\Theta _{n+1}(t,u):=\sum_{j=0}^{+\infty }e^{-4t(j+\frac{n}{2})^{2}}\cos(2j+n)u.
\end{equation*}
They also  suggested tackling the heat trace asymptotics problem by exploiting \eqref{IRHK_HI}.\\

In the present paper, we deal with similar questions in the context of the complex projective $n$-space and for the following perturbed form of $\Delta_0$:
\begin{equation}
\Delta _{\nu }:=4(1+\langle z,z\rangle )\left( \sum_{i,j=1}^{n}(\delta
_{ij}+z_{i}\bar{z_{j}})\frac{\partial ^{2}}{\partial z_{i}\partial \bar{z_{j}%
}}-\nu \sum_{j=1}^{n}\left( z_{j}\frac{\partial }{\partial z_{j}}-\bar{z_{j}}%
\frac{\partial }{\partial \bar{z_{j}}}\right) -\nu ^{2}\right) +4\nu ^{2}
\label{delta_nu_intro}
\end{equation}
called magnetic Laplacian (with $2\nu\in \mathbb{Z}_+$) as introduced in \cite{HI_2005}, which can also be viewed as the Bochner Laplacian on powers of the Hopf line bundle and of its conjugate \cite{BCC,Pee-Zha}. When acting on the space of bounded functions, $\Delta _{\nu }$ admits a discrete spectrum consisting on eigenvalues called spherical Landau levels ($\nu$ is proportional to the magnetic field strength). For the corresponding eigenspaces $\mathcal{A}_{m}^{\nu}, \ m\in \mathbb{Z}_+$, the authors \cite{DMY} have established formulae for the associated Berezin transforms as functions of $\Delta_0$. 

For these spaces $\mathcal{A}_{m}^{\nu}$, we present a new proof for their reproducing kernels  using Zaremba's expansion directly. These reproducing kernels are then used to establish an integral representation for the heat kernel of $\Delta_\nu$, extending the one in \eqref{IRHK_HI} for $\nu\neq 0$. To obtain the heat coefficients in the asymptotic expansion of the trace of the heat operator $\exp\left(\frac{1}{4}t\Delta_\nu\right)$, we first use a suitable polynomial decomposition of dimensions of eigenspaces $\mathcal{A}_{m}^{\nu}$ to write down a trace formula for this operator. Next, we express this trace in terms of Jacobi's theta functions and their higher order derivatives, and then take the asymptotics of these special functions as $t\searrow 0^+$. Doing so enables us, after straightforward calculations, to find out precise formulae for the heat coefficients in terms of Bernoulli numbers and polynomials.\\

The paper is organized as follows. In Section 2, we recall the geometrical construction of
magnetic Laplacians $\Delta _{\nu }$ on the complex projective $n$-space and we illustrate this construction in the case $n=1$ by explaining the role of Dirac monopoles. In Section 3, we summarize some notations about
spherical harmonics that help us recalling some results on eigenspaces of $\Delta _{\nu }$. For these spaces, we give in Section 4 a new proof for their reproducing kernels. In Section 5, we recall the Cauchy problem for the heat equation associated with $\Delta _{\nu }$. In Section 6 we establish an integral representation for the corresponding heat kernel. In Section 7, we write down an asymptotic expansion of the trace of the heat semigroup $\exp\left(\frac{1}{4}t\Delta _{\nu }\right)$ where the heat coefficients are obtained explicitly. In Section 8, we exhibit these coefficients for specific values of $n\in \{1,2,3,4\}$. In appendix A, we list the definitions of some needed special functions and orthogonal polynomials. 

\section{Magnetic Laplacians $\Delta _{\protect\nu }$}

We here recall from \cite{HI_2005} the construction of magnetic Laplacians $\Delta _{\nu }$ on the complex projective $n$-space, $n\geq 1$. Let $\mathbb{S}^{2n+1}$ be the $(2n+1)-$dimensional unit sphere of $\mathbb{C}^{n+1}.$
Then, the unit circle $U\left( 1\right) \equiv \mathbb{S}^{1}$ acts freely on 
$\mathbb{S}^{2n+1}$ and define the complex projective space by $\mathbf{P}^{n}(\mathbb{C})=\mathbb{S}^{1}\setminus $ $\mathbb{S}^{2n+1} $ to be the set of all complex lines of $\mathbb{C}^{n+1}$. Indeed,
the Hopf fibration $\mathbb{S}^{1}\rightarrow \mathbb{S}^{2n+1}\rightarrow \mathbf{P}^{n}(\mathbb{C})$ defines a principal $U\left(1\right) -$bundle on $\mathbf{P}^{n}(\mathbb{C})$ whose associated complex line is $%
\mathfrak{L}=\left\{ \left( l,z\right) \in \mathbf{P}^{n}\left( \mathbb{C}\right) \times \mathbb{C}^{n+1},z\in l\right\} $. It is endowed with the
Fubini-Study metric $ds_{FS}^{2}$ which reads in a standard charte%
\begin{equation*}
\mathbb{C}^{n}\equiv \left\{ \left( z_{1},...,z_{n+1}\right) \in \mathbb{C}%
^{n+1};z_{n+1}=1\right\}
\end{equation*}
or local coordinates as 
\begin{equation}
ds_{FS}^{2}:=\sum\limits_{i,j=1}^{n}\left( \left( 1+\langle z,z\rangle
\right) \delta _{ij}-z_{i}\overline{z}_{j}\right) dz_{i}\otimes d\overline{z}%
_{j}, \quad \langle z,z\rangle=|z|^2,
\end{equation}%
where $g_{ij}\left( z\right) =\left( 1+\langle z,z\rangle \right)
^{-2}\left( \left( 1+\langle z,z\rangle \right) \delta _{ij}-z_{i}\overline{z%
}_{j}\right) .$ The complex projective $n$-space equipped with this metric
is a Kh\"{a}lerian compacte manifold of complex dimension $n.$ The
associated Laplace-Beltrami operator is given by 
\begin{equation}
\sum\limits_{i,j=1}^{n}g^{ij}\left( z\right) \frac{\partial ^{2}}{\partial
z_{i}\partial \overline{z}_{j}}
\end{equation}%
where $\left( g^{ij}\left( z\right) \right) $ is the inverse of the matrix $%
\left( g_{ij}\left( z\right) \right) .$ In local coordinates $z=\left(
z_{1},...,z_{n}\right) $ $\in $ $\mathbb{C}^{n},$ it reads%
\begin{equation}
\Delta _{0}=4(1+\langle z,z\rangle )\sum_{i,j=1}^{n}(\delta _{ij}+z_{i}\bar{%
z}_{j})\frac{\partial ^{2}}{\partial z_{i}\partial \bar{z}_{j}}.
\label{delta_fs}
\end{equation}%
The Fubini-Study distance is defined by 
\begin{equation}  \label{d_FS}
\cos ^{2}d_{FS}(z,w)=\frac{|1+\langle z,w\rangle |^{2}}{(1+\langle
z,z\rangle )(1+\langle w,w\rangle )}.\qquad
\end{equation}%
Let $\nabla =d+\partial \log \left( 1+\langle z,z\rangle \right) $ be the
unique hermitian connection associated with $ds_{FS}^{2}$ on $\mathfrak{L}.$
Now, for a positive integer $\nu $ let $\mathfrak{L}^{\nu }:=\left( 
\overline{\mathfrak{L}^{\ast }}\right) ^{\otimes \nu }\otimes \left( 
\mathfrak{L}^{\ast }\right) ^{\otimes \nu }$ the complex line bundle on $\mathbf{P}%
^{n}(\mathbb{C}).$ Then, the corresponding hermitian connection reads 
\begin{equation}
\nabla _{\nu }=d+\nu \left( \partial -\overline{\partial }\right) \partial
\log \left( 1+\langle z,z\rangle \right) .
\end{equation}%
Next, consider the operator 
\begin{equation}
\Delta _{\nu }:=-\left( \nabla _{\nu }\right) ^{\ast }\nabla _{\nu }
\label{delta_nu0}
\end{equation}%
acting on the space of smooth sections $\Gamma _{n,\nu }^{\infty }:=$ $%
C^{\infty }\left( \mathbf{P}^{n}(\mathbb{C}) ,\mathfrak{L}^{\nu }\right) ,$
which is also known as the Bochner Laplacian on hermitian line bundles
parametrized by the magnetic field strength $\nu .$ Precisely, in the local
coordinates the operator $\Delta _{\nu }$ takes the form 
\begin{equation}
\Delta _{\nu }=4(1+\langle z,z\rangle )\left( \sum_{i,j=1}^{n}(\delta
_{ij}+z_{i}\bar{z_{j}})\frac{\partial ^{2}}{\partial z_{i}\partial \bar{z_{j}%
}}-\nu \sum_{j=1}^{n}\left( z_{j}\frac{\partial }{\partial z_{j}}-\bar{z_{j}}%
\frac{\partial }{\partial \bar{z_{j}}}\right) -\nu ^{2}\right) +4\nu ^{2}
\label{delta_nu1}
\end{equation}%
which, according to \cite{HI_2005}, will be called a magnetic Laplacians on $\mathbf{P}^{n}(\mathbb{C}) $. The dependence of this operator on $n$ is omitted.

\subsection{Example}

\label{example1} On elements of $\Gamma _{1,\nu }^{\infty }$, which are
smooth sections of the $U\left( 1\right) $-bundle with the first Chern class 
$\nu $ acts the Hamiltonian $H_{\nu }$ of the Dirac (point) monopole in $%
\mathbb{R}^{3}$ with magnetic charge $\nu $ (in suitable units). Indeed,
eigenfunctions of this monopole have been identified as \textit{sections} by
Wu and Yang \cite{WY} and are known as \textit{monopole harmonics.} Their
explicit expression in the coordinate $z$ are given below by \eqref{phi_k}.
For more information on Dirac monopoles see \cite{Sh}. The restriction $2\nu
\in \left\{ 1,2,3,...\right\} $ results from Dirac's quantization condition
for monopole charges, which requires that the total flux of the magnetic
field across a closed surface must be an integer mulitple of a universal
constant. It can also be understood in the context of cohomology groups for
hermitian line bundles \cite{Hir} or as the Weil-Souriau-Kostant
quantization condition \cite{SW}. In the stereographic coordinate $z\in 
\mathbb{C\cup }\left\{ \infty \right\} \equiv \mathbb{S}^{2}\equiv \mathbf{P}^{1}(\mathbb{C}) $ (and suitable units) this Hamiltonian reads (\cite{FV}, \text{p.598})
: 
\begin{equation}
L_{\nu }=-(1+\left\vert z\right\vert ^{2})\left( (1+\left\vert z\right\vert
^{2})\frac{\partial ^{2}}{\partial z\partial \bar{z}}+\nu \left( z_{j}\frac{%
\partial }{\partial z_{j}}-\bar{z_{j}}\frac{\partial }{\partial \bar{z_{j}}}%
\right) -\nu ^{2}\right) -\nu ^{2}\equiv -\frac{1}{4}\Delta _{\nu }.
\label{H_nu}
\end{equation}%
The stereorgraphic projection bridges between the monopole system and the Landau
system which describes spinless charged particles in perpendicular
homogeneous magnetic fields (\cite{Dun}, p.240). Precisely, to determine
eigenstates of the monopole Hamiltonian \eqref{H_nu} one proceeds exactly as
for the Landau Hamiltonian of the Euclidean setting. This leads, for each
fixed integer $m\in \mathbb{Z}_{+}$, to a finite dimensional $L^{2}$
eigenspace whose orthonormal basis vectors $\left\{ \Phi _{k}^{\nu
,m}\right\} ,-m\leq k\leq 2\nu +m,$ are given by \cite{Mo} :

\begin{equation}
\Phi _{k}^{\nu ,m}(z):=\sqrt{\frac{\left( 2\nu +2m+1\right) \left( 2\nu
+m\right) !m!}{(m+k)!(2\nu +m-k)!}}\left( 1+z\overline{z}\right) ^{-\nu
}z^{k}P_{m}^{\left( k,2\nu -k\right) }\left( \frac{1-z\overline{z}}{1+z%
\overline{z}}\right) ,  \label{phi_k}
\end{equation}%
$P_{m}^{\left( \alpha ,\beta \right) }\left( \cdot \right) $ being the
Jacobi polynomial \cite{AAR}. These eigenfunctions \eqref{phi_k} are associated
with the eigenvalue
\begin{equation}  \label{tau_m}
\tau _{m}:=\left( 2m+1\right) \nu +m\left( m+1\right)
\end{equation}
called a spherical Landau level.

\section{Spaces of bounded eigenfunctions of $\Delta _{\protect\nu }$}
In order to summarize some needed results \cite{HI_2005} about eigenspaces of $\Delta_\nu$, we first need to fix some notations. 
\subsection{Spherical harmonics}
Let $\mathcal{P}(\mathbb{C}^{n})$ denote the space of polynomials in the
independent variable $z$ and $\bar{z}$ of $\mathbb{C}^{n}$. Elements of this
space can be written in the form 
\begin{equation}
u(z,\bar{z})=\sum_{|\alpha |\leq k}\sum_{|\beta |\leq l}c_{\alpha ,\beta
}z^{\alpha }\bar{z}^{\beta },\qquad c_{\alpha ,\beta }\in \mathbb{C},\quad
\alpha ,\beta \in \mathbb{Z}_{+}^{n},
\end{equation}%
for non-negative integers $k$ and $l$, where standard multi-index is used.

The subspace of $\mathcal{P}(\mathbb{C}^{n})$ composed of polynomials that
are homogeneous of degree $p$ in $z$ and of degree $q$ in $\bar{z}$ will be
denoted by $\mathcal{P}_{p,q}(\mathbb{C}^{n})$. The dimension of $\mathcal{P}%
_{p,q}(\mathbb{C}^{n})$ is given by 
\begin{equation}
\delta (n,p,q)=%
\begin{pmatrix}
p+n-1 \\ 
p-1%
\end{pmatrix}%
\begin{pmatrix}
q+n-1 \\ 
q-1%
\end{pmatrix}%
\end{equation}%
The subspace of $\mathcal{P}_{p,q}(\mathbb{C}^{n})$ composed of harmonic
elements, that is, elements in the kernel of the complex Laplacian 
\begin{equation}
\sum_{j=1}^{n}\frac{\partial ^{2}}{\partial
z_{j}\partial \bar{z}_{j}}
\end{equation}%
will be denoted by $\mathfrak{H}_{p,q}(\mathbb{C}^{n})$. 

The set of restrictions of elements of $\mathfrak{H}_{p,q}(\mathbb{C}^{n})$ to the unit
sphere $\mathbb{S}^{2n-1}=\{\zeta \in \mathbb{C}^{n},\langle \zeta ,\zeta
\rangle =1\},$ denoted by $\mathcal{H}\left( p,q\right) $, is called the
space of complex spherical harmonics of degree $p$ in $z$ and degree $q$ in $%
\bar{z}$. Note that $\mathcal{H}\left( p,0\right) $ consists of holomorphic
polynomials, and $\mathcal{H}\left( 0,q\right) $ consists of polynomials
whose complex conjugates are holomorphic. 

The dimensions of spaces $\mathcal{%
H}\left( p,q\right) $, denoted by $d(n,p,q)$, are given by 
\begin{equation}
d(n,p,q)=\delta (n,p,q)-\delta (n,p-1,q-1),\quad p,q\neq 0,
\end{equation}%
\begin{equation}
d(n,p,0)=\delta (n,p,0)\qquad \mathrm{and}\qquad d(n,0,q)=\delta (n,0,q).
\end{equation}%
For $n=1$, $d(n,p,0)=d(n,0,q)=1$, but $\mathcal{H}\left( p,q\right) =\{0\}$
if both $p>0$ and $q>0$. It is a standard fact that $\mathcal{H}%
\left( p,q\right) $ are pairwise orthogonal in $L^{2}(\mathbb{S}%
^{2n-1},d\omega )$ where $d\omega $ is the uniform measure on the sphere.
\subsection{Bounded eingenfunctions of $\Delta _{\nu }$}
For $\lambda \in \mathbb{C}$,
we set $\Lambda _{n,\nu }\left( \lambda \right) :=n^{2}-\lambda ^{2}+4\nu^{2}$ and consider the equation 
\begin{equation}
\Delta _{\nu }F(z)=\Lambda _{n,\nu }\left( \lambda \right) F(z),
\end{equation}%
where $F$ is a bounded function on $\mathbb{C}^{n}$. Define the eigenspace 
\begin{equation}
\mathcal{A}_{m}^{\nu }:=\{F\in L^{\infty }(\mathbb{C}^{n}),\text{ }\Delta
_{\nu }F=\Lambda _{n,\nu }\left( \lambda \right) F\}\subset L^{2}(\mathbb{C}%
^{n},d\mu _{n}),  \label{A_m,nu}
\end{equation}%
where 
\begin{equation}
d\mu _{n}(z):=(1+\langle z,z\rangle )^{-(n+1)}d\mu \left( z\right) ,
\end{equation}%
$d\mu \left( z\right) $ being the Lebesgue measure on $\mathbb{C}^{n}.$ The
eigenspace $\mathcal{A}_{m}^{\nu }=\{0\}$ if 
\begin{equation}
\lambda \not\in \{\xi \in \mathbb{C},\ \frac{1}{2}(n\pm \xi )+\nu \in 
\mathbb{Z}_{-}\}\cup \{\xi \in \mathbb{C},\ \frac{1}{2}(n\pm \xi )-\nu \in 
\mathbb{Z}_{-}\}.
\end{equation}%
otherwise it is not trivial if and only if $\lambda $ has the form $\lambda
=\pm (2(m+\nu )+n)$ for some $m\in \mathbb{Z}_{+}$. Note that when $n=1$ and 
$\lambda =\pm (2(m+\nu )+1)$ the above parametrization of the eigenvalue of $%
\Delta _{\nu }$ gives that $\frac{-1}{4}\Lambda _{1,\nu }\left( \lambda
\right) =\tau _{m}$ (where $\tau _{m}$ was given by \eqref{tau_m}) as
expected form the example in Subsection \ref{example1}. For $n\geq 1$ and
under the condition $\lambda =\pm (2(m+\nu )+n)$ one gets the eigenvalue 
\begin{equation}
\beta _{m}:=\Lambda _{n,\nu }\left( \pm (2(m+\nu )+n)\right) =-4(m+\nu
)(m+\nu +n)+4\nu ^{2}
\end{equation}%
and any function $F$ in $\mathcal{A}_{m}^{\nu }$ admits the expansion 
\begin{equation}
F(r\omega )=\frac{1}{\left( 1+r^{2}\right) ^{(m+\nu )}}\sum_{\substack{ %
0\leq p\leq m  \\ 0\leq q\leq m+2\nu }}r^{p+q}{}_{2}F_{1}\left( 
\begin{array}{c}
p-m,q-m-2\nu \\ 
n+p+q%
\end{array}%
\mid -r^{2}\right) \sum_{j=1}^{d(n,p,q)}a_{j}^{\nu ,p,q}\text{ }%
h_{p,q}^{j}(\omega ,\bar{\omega}),
\end{equation}%
where $r>0,\ \omega \in \mathbb{S}^{2n-1},\ a_{j}^{\nu ,p,q}$ are constant
complex numbers, $_{2}F_{1}$ is the Gauss hypergeometric function (see Appendix \ref{appendix1}) and $\left\{ h_{p,q}^{j}\right\}
_{j=0}^{d(n,p,q)}$ is an orthonormal basis of $\mathcal{H}\left( p,q\right) $%
. Note that $F$ satisfies the growth condition 
\begin{equation}
\lim_{r\rightarrow \infty }F(r\omega )=\sum_{0\leq p\leq m}\frac{\Gamma
(m-p+1)\Gamma (n+2p+2\nu )}{(-1)^{p-m}\Gamma (m+n+p+2\nu )}%
\sum_{j=1}^{d(n,p,p+2\nu )}a_{j}^{\nu ,p}\text{ }h_{p,p+2\nu }^{j}(\omega ,%
\bar{\omega})
\end{equation}%
where we wrote $a_{j}^{\nu ,p}=$ $a_{j}^{\nu ,p,p+2\nu }$. In particular, $\mathcal{A}_m^\nu$ is finite-dimensional and has the following orthonormal basis \cite[p.150]{HI_2005}: 
\begin{equation}\label{basis}
\Phi _{j}^{p,q,m}(z)=\gamma _{p,q}^{n,\nu ,m}(1+\left\langle
z,z\right\rangle )^{-m-\nu }{}_{2}F{}_{1}\left( 
\begin{array}{c}
p-m,q-m-2\nu \\ 
n+p+q%
\end{array}%
\big|-\left\langle z,z\right\rangle \right) h_{p,q}^{j}(z,\bar{z})
\end{equation}%
where 
\begin{equation*}
\gamma _{p,q}^{n,\nu ,m}=\sqrt{\frac{2(2m+2\nu +n)\Gamma (m+q+n)\Gamma
(m+2\nu +p+n)}{\Gamma ^{2}(n+p+q)\Gamma (m-p+1)\Gamma (m+2\nu -q+1)}}.
\end{equation*}%
for varying $0\leq p\leq m,$ $0\leq q\leq m+2\nu $ and $j=1,...,d(n;p,q)$.
Here $\gamma _{p,q}^{n,\nu ,m}=\left\Vert \Phi _{j}^{p,q,m}(z)\right\Vert
^{-1}$ where the norm is taken in $L^{2}\left( \mathbb{C}^{n},d\mu
_{n}\left( z\right) \right) $. The reproducing kernel of $\mathcal{A}_m^\nu$ is given by \cite[p.148]{HI_2005}:
\begin{equation}
K_{\nu ,m}(z,w):=\frac{(2m+2\nu +n)\Gamma (m+n+2\nu )}{\pi ^{n}\Gamma
(m+2\nu +1)}\left( \frac{\left( 1+\langle w,z\rangle \right) ^{2}}{%
(1+|z|^{2})(1+|w|^{2})}\right) ^{\nu }P_{m}^{(n-1,2\nu )}(\cos 2d_{FS}(z,w)),
\label{rep_kern}
\end{equation}%
where $P_{m}^{(n-1,2\nu )}(\cdot )$ is the Jacobi polynomial of parameters $n-1 $ and $2\nu $ (see \eqref{jacobi}).
\section{A new proof for reproducing kernels of $\mathcal{A}_{m}^{\protect\nu }$}
In order to obtain the reproducing kernel \eqref{rep_kern}, the authors \cite{HI_2005} have used the Zaremba expansion (\cite{Szafraniec}) of the reproducing kernel to compute $K_{\nu ,m}(z,0)$ as well as to
establish a two-point invariance-type property of this kernel, involving the transitive action of the group $SU\left(n+1\right) $ on $\mathbf{P}^{n}\left( \mathbb{C}\right) .$ Here, we shall retrieve the same result by using the Zaremba expansion directly. 
\begin{lemma}
The functions \eqref{basis} satisfy 
\begin{equation}  \label{RK}
\begin{split}
\sum_{p=0}^{m}\sum_{q=0}^{m+2\nu }\sum_{j=1}^{d(n,p,q)}\Phi _{m,p,q}^{j}(z)%
\overline{\Phi _{m,p,q}^{j}(w)}&=\frac{(2m+2\nu +n)\Gamma (m+2\nu +n)}{\pi
^{n}\Gamma (m+2\nu +1)}\left( \frac{\left( 1+\langle w,z\rangle \right) ^{2}%
}{(1+|z|^{2})(1+|w|^{2})}\right) ^{\nu } \\
&\times P_{m}^{(n-1,2\nu )}\left( \frac{2|1+\langle z,w\rangle |^{2}}{%
(1+|z|^{2})(1+|w|^{2})}-1\right) ,
\end{split}
\end{equation}
for every $z,w\in \mathbb{C}^{n}$.
\end{lemma}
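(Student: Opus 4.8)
The plan is to read the left-hand side of \eqref{RK} as the Zaremba (orthonormal-basis) expansion of the reproducing kernel of the finite-dimensional space $\mathcal{A}_m^\nu$ and to evaluate it in closed form without passing, as in \cite{HI_2005}, through $K_{\nu,m}(z,0)$ and the $SU(n+1)$-invariance. Writing $\langle z,z\rangle=r^{2}$, $\langle w,w\rangle=s^{2}$ and $\zeta=z/r$, $\eta=w/s$ on $\mathbb{S}^{2n-1}$, I would first use the homogeneity of the spherical harmonic $h_{p,q}^{j}$ to factor each basis element \eqref{basis} as
\begin{equation*}
\Phi_{j}^{p,q,m}(z)=\gamma_{p,q}^{n,\nu,m}\,(1+r^{2})^{-m-\nu}\,{}_{2}F_{1}\!\left(\begin{array}{c}p-m,\,q-m-2\nu\\ n+p+q\end{array}\,\Big|-r^{2}\right)r^{p+q}\,h_{p,q}^{j}(\zeta,\bar\zeta),
\end{equation*}
so that the triple sum splits into radial factors (the constants $\gamma_{p,q}^{n,\nu,m}$, the powers $(1+r^{2})^{-m-\nu}(1+s^{2})^{-m-\nu}$, two terminating Gauss functions, and the monomial $(rs)^{p+q}$) times the inner angular sum $\sum_{j}h_{p,q}^{j}(\zeta,\bar\zeta)\overline{h_{p,q}^{j}(\eta,\bar\eta)}$.

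The key tool for the angular part is the addition theorem for complex spherical harmonics: for an orthonormal basis of $\mathcal{H}(p,q)$ one has $\sum_{j=1}^{d(n,p,q)}h_{p,q}^{j}(\zeta,\bar\zeta)\overline{h_{p,q}^{j}(\eta,\bar\eta)}=\omega_{2n-1}^{-1}\,d(n,p,q)\,R_{p,q}^{\,n-2}(\langle\zeta,\eta\rangle)$, where $\omega_{2n-1}=2\pi^{n}/\Gamma(n)$ is the area of the sphere and $R_{p,q}^{\,n-2}$ is the normalized disk polynomial, equal for $p\ge q$ to a constant times $\langle\zeta,\eta\rangle^{\,p-q}P_{q}^{(n-2,\,p-q)}(2|\langle\zeta,\eta\rangle|^{2}-1)$. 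Substituting this and using $\langle\zeta,\eta\rangle=\langle z,w\rangle/(rs)$ reduces \eqref{RK} to an explicit double sum over $0\le p\le m$, $0\le q\le m+2\nu$, whose summand is the product of the two radial Gauss functions, the disk polynomial, and the combined constant $(\gamma_{p,q}^{n,\nu,m})^{2}\,\omega_{2n-1}^{-1}d(n,p,q)$. Note that the $\pi^{n}$ and the factor $2m+2\nu+n$ of the target prefactor already surface here, from $\omega_{2n-1}$ and from $\gamma_{p,q}^{n,\nu,m}$ respectively.

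The hard part will be the closed-form evaluation of this double sum. After rewriting each ${}_{2}F_{1}$ as a Jacobi polynomial in $r^{2}$, respectively $s^{2}$, via the terminating-${}_{2}F_{1}$ representation of Jacobi polynomials, I expect the $(p,q)$-sum of products of Jacobi polynomials in $r^{2}$, $s^{2}$ and in $2|\langle z,w\rangle|^{2}/(r^{2}s^{2})-1$ to collapse, by a Koornwinder-type product/addition formula for Jacobi polynomials (or, equivalently, a generating-function manipulation), into the single factor $\bigl(\,\cdots\,\bigr)^{\nu}P_{m}^{(n-1,2\nu)}$ at the argument $2|1+\langle z,w\rangle|^{2}/((1+r^{2})(1+s^{2}))-1=\cos 2d_{FS}(z,w)$. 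The genuine obstacle is precisely this algebraic collapse, since the intermediate disk polynomials carry the argument $r^{2}s^{2}$ whereas the target Jacobi polynomial must carry $(1+r^{2})(1+s^{2})$, and simultaneously the parameter must shift from $n-2$ to $n-1$; resolving this amounts to pinning down the correct summation identity and then checking that the Gamma-function constants assemble into $\frac{(2m+2\nu+n)\Gamma(m+2\nu+n)}{\pi^{n}\Gamma(m+2\nu+1)}$. A cheap consistency check I would run first is $w=0$: then $h_{p,q}^{j}$ forces all terms to vanish except $(p,q)=(0,0)$, and the claimed identity degenerates to the known relation between ${}_{2}F_{1}(-m,-m-2\nu;n;-r^{2})$ and $P_{m}^{(n-1,2\nu)}$, reproducing the value $K_{\nu,m}(z,0)$ of \cite{HI_2005}.
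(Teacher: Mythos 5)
Your setup is the same as the paper's: read the left-hand side of \eqref{RK} as the Zaremba expansion, factor the basis functions \eqref{basis} by homogeneity, and evaluate the angular sum $\sum_{j}h_{p,q}^{j}(\zeta,\bar\zeta)\overline{h_{p,q}^{j}(\eta,\bar\eta)}$ by Koornwinder's addition theorem, yielding $(2\pi^{n})^{-1}\Gamma(n)\,d(n,p,q)\,R_{p,q}^{n-2}(\langle\zeta,\eta\rangle)$. Up to that point you are on the paper's track, and your $w=0$ consistency check is sound as far as it goes.

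The gap is that the step you yourself call ``the genuine obstacle'' --- the collapse of the resulting double sum --- is the entire content of the proof, and your proposal neither names the identity that performs it nor puts the sum in the form to which that identity applies. The paper does \emph{not} rewrite the two terminating ${}_{2}F_{1}$'s as Jacobi polynomials in $r^{2}$ and $s^{2}$, as you propose (that representation has argument $1+2r^{2}$ and a negative second parameter, and matches no standard addition formula); instead it converts them, via \eqref{2F1-R}, into disk polynomials $R_{m-p,\,m-q+2\nu}^{\,n+p+q-1}$ evaluated at $(1+|z|^{2})^{-1/2}$ and $(1+|w|^{2})^{-1/2}$. In that normalization the double sum is, term by term, the right-hand side of the \v{S}apiro--Koornwinder addition formula for disk polynomials, specialized to $z_{1}=(1+|z|^{2})^{-1/2}$, $z_{2}=(1+|w|^{2})^{-1/2}$, $y=\langle z/|z|,w/|w|\rangle$, $k=m$, $l=m+2\nu$, $\gamma=n-1$, once one verifies that $(2\pi^{n})^{-1}\Gamma(n)\bigl(\gamma_{p,q}^{n,\nu,m}\bigr)^{2}$ equals the Pochhammer coefficients of that formula up to the overall constant $\beta_{\nu,n,m}$. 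The formula then collapses everything to $\beta_{\nu,n,m}\,R_{m,m+2\nu}^{\,n-1}\bigl((1+\langle z,w\rangle)\,(1+|z|^{2})^{-1/2}(1+|w|^{2})^{-1/2}\bigr)$: the argument $z_{1}\bar z_{2}+(1-z_{1}\bar z_{1})^{1/2}(1-z_{2}\bar z_{2})^{1/2}y$ is exactly $(1+\langle z,w\rangle)/\sqrt{(1+|z|^{2})(1+|w|^{2})}$, so the mismatch between $r^{2}s^{2}$ and $(1+r^{2})(1+s^{2})$ that worries you never arises, and the parameter shift $n-2\to n-1$ is built into the formula (the angular factor carries $\gamma-1$, the collapsed polynomial carries $\gamma$). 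Unwinding $R_{m,m+2\nu}^{\,n-1}$ by \eqref{Zernike_poly} then gives \eqref{RK}. Without identifying and applying this specific addition formula (or proving an equivalent summation identity), the proposal stops short of a proof; the constant-matching and the $w=0$ check cannot substitute for the collapse.
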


\begin{proof}
Denoting the sum in the l.h.s of \eqref{RK} by $K_{\nu ,m}(z,w)$
and replacing the orthonormal basis \eqref{basis} of $\mathcal{A}_m^\nu$ by their expressions, we obtain 
\begin{equation}  \label{repr_kern_1}
\begin{split}
K_{\nu ,m}(z,w)&=\left( (1+\left\langle z,z\right\rangle )\left( 1+\left\langle
w,w\right\rangle \right) \right) ^{-\nu -m}\sum_{\substack{ 1\leq j\leq
d(n,p,q)  \\ 0\leq p\leq m,\ 0\leq q\leq m+2\nu }}\left( \gamma
_{p,q}^{n,\nu ,m}\right) ^{2}h_{p,q}^{j}(z,\bar{z})\overline{h_{p,q}^{j}(w,%
\bar{w})} \\
&\times _{2}F{}_{1}\left( 
\begin{array}{c}
p-m,q-m-2\nu \\ 
n+p+q%
\end{array}%
\big|-\left\langle z,z\right\rangle \right) {}_{2}F{}_{1}\left( 
\begin{array}{c}
p-m,q-m-2\nu \\ 
n+p+q%
\end{array}%
\big|-\left\langle w,w\right\rangle \right)
\end{split}%
\end{equation}%
which can also be written as 
\begin{equation}  \label{Kern_2F1}
\begin{split}
K_{\nu ,m}(z,w)& =\left( (1+\left\langle z,z\right\rangle )\left(
1+\left\langle w,w\right\rangle \right) \right) ^{-\nu
-m}\sum_{p=0}^{m}\sum_{q=0}^{m+2\nu }\left( \gamma _{p,q}^{n,\nu ,m}\right)
^{2}\mathfrak{S}_{n;p,q}^{z,w} \\
& \times {}_{2}F{}_{1}\left( 
\begin{array}{c}
p-m,q-m-2\nu \\ 
n+p+q%
\end{array}%
\big|-\left\langle z,z\right\rangle \right) {}_{2}F{}_{1}\left( 
\begin{array}{c}
p-m,q-m-2\nu \\ 
n+p+q%
\end{array}%
\big|-\left\langle w,w\right\rangle \right)
\end{split}%
\end{equation}%
where 
\begin{equation}  \label{sum_hs}
\begin{split}
\mathfrak{S}_{n;p,q}^{z,w}&:=\sum_{\substack{ 1\leq j\leq d(n,p,q)}}%
h_{p,q}^{j}(z,\bar{z})\overline{h_{p,q}^{j}(w,\bar{w})} \\
&=\left( |z||w|\right) ^{p+q}\sum_{\substack{ 1\leq j\leq d(n,p,q)}}%
h_{p,q}^{j}\left( \frac{z}{|z|},\frac{\bar{z}}{|z|}\right) \overline{%
h_{p,q}^{j}\left( \frac{w}{|w|},\frac{\bar{w}}{|w|}\right) }.
\end{split}%
\end{equation}%
As usual, in the spherical harmonics setting, this last sum can be computed
via the Koornwinder's formula \cite{Koor_1972} as
\begin{equation*}
\mathfrak{S}_{n;p,q}^{z,w}=\frac{\Gamma (n)d(n,p,q)}{2\pi ^{n}}\left(
|z||w|\right) ^{p+q}\left\vert \left\langle \frac{z}{|z|},\frac{w}{|w|}%
\right\rangle \right\vert ^{\left\vert p-q\right\vert }e^{i\left( p-q\right)
\arg \left( \left\langle \frac{z}{|z|},\frac{w}{|w|}\right\rangle \right) }%
\frac{P_{\min \left( p,q\right) }^{(n-2,\left\vert p-q\right\vert )}\left(
2\left\vert \left\langle \frac{z}{|z|},\frac{w}{|w|}\right\rangle
\right\vert ^{2}-1\right) }{P_{\min \left( p,q\right) }^{(n-2,\left\vert
p-q\right\vert )}(1)}.
\end{equation*}
Now, using the notation $R_{p,q}^{\gamma }(\xi )$ in \eqref{Zernike_poly}, $\mathfrak{S}_{n;p,q}^{z,w}$ also reads 
\begin{equation}  \label{Koor_sum}
\mathfrak{S}_{n;p,q}^{z,w}=(2\pi ^{n})^{-1}\Gamma
(n)d(n,p,q)(|z||w|)^{p+q}R_{p,q}^{n-2}\left( \left\langle \frac{z}{|z|},%
\frac{w}{|w|}\right\rangle \right) .
\end{equation}%
On the other hand, we also need to write the terminating Gauss hypergeometric
functions $_{2}F{}_{1}$ in \eqref{Kern_2F1} in terms of Zernike
polynomials \eqref{Zernike_poly}. For that, we use \eqref{2F1-R} to obtain 
\begin{equation}  \label{2F1_Zern}
{}_{2}F{}_{1}\left( 
\begin{array}{c}
p-m,q-m-2\nu \\ 
n+p+q%
\end{array}%
\big|-|z|^{2}\right) =\left( 1+|z|^{2}\right) ^{m+\nu -\frac{p+q}{2}%
}R_{m-p,m-q+2\nu }^{n+p+q-1}\left( \left( 1+|z|^{2}\right) ^{-\frac{1}{2}%
}\right) .
\end{equation}%
Now, by replacing \eqref{Koor_sum} and \eqref{2F1_Zern} in \eqref{Kern_2F1},
the sum \eqref{repr_kern_1} takes the form 
\begin{equation}  \label{Kern2}
\begin{split}
K_{\nu ,m}(z,w)& =\sum_{p=0}^{m}\sum_{q=0}^{m+2\nu }\left[ (2\pi
^{n})^{-1}\Gamma (n)\left( \gamma _{p,q}^{n,\nu ,m}\right) ^{2}\right]
d(n,p,q)\left( \frac{\left( |z||w|\right) ^{2}}{\left( 1+|z|^{2}\right)
\left( 1+|w|^{2}\right) }\right) ^{\frac{1}{2}\left( p+q\right) } \\
& \times R_{m-p,m-q+2\nu }^{n+p+q-1}\left( \left( 1+|z|^{2}\right) ^{-\frac{1%
}{2}}\right) R_{m-p,m-q+2\nu }^{n+p+q-1}\left( \left( 1+|w|^{2}\right) ^{-%
\frac{1}{2}}\right) R_{p,q}^{n-2}\left( \left\langle \frac{z}{|z|},\frac{w}{%
|w|}\right\rangle \right)
\end{split}%
\end{equation}%
with
\begin{equation}
(2\pi ^{n})^{-1}\Gamma (n)\left( \gamma _{p,q}^{n,\nu ,m}\right) ^{2}=\beta
_{\nu ,n,m}\frac{(-1)^{p+q}(-m)_{p}(-m-2\nu )_{q}(n+m+2\nu )_{p}(n+m)_{q}}{%
(n)_{p+q}(n)_{p+q}},
\end{equation}%
where the constant
\begin{equation}
\beta _{\nu ,n,m}=\frac{(2m+2\nu +n)\Gamma (n+m)\Gamma (n+m+2\nu )}{\pi
^{n}\Gamma (n)m!(m+2\nu )!}
\end{equation}%
was obtained by using identities \eqref{binom} and \eqref{PochhammerGG}. Therefore, \eqref{Kern2} becomes
\begin{equation*}
\begin{split}
K_{\nu ,m}(z,w)& =\beta _{\nu ,n,m}\sum_{\substack{ 0\leq p\leq m  \\ 0\leq
q\leq m+2\nu }}\frac{(-1)^{p+q}(-m)_{p}(-m-2\nu )_{q}(n+m+2\nu )_{p}(n+m)_{q}%
}{(d(n,p,q))^{-1}(n)_{p+q}(n)_{p+q}}\left( \frac{|z|^{2}|w|^{2}}{%
(1+|z|^{2})(1+|w|^{2})}\right) ^{\frac{p+q}{2}} \\
& \times R_{m-p,m-q+2\nu }^{n+p+q-1}\left( \left( 1+|z|^{2}\right) ^{-\frac{1%
}{2}}\right) R_{m-p,m-q+2\nu }^{n+p+q-1}\left( \left( 1+|w|^{2}\right) ^{-%
\frac{1}{2}}\right) R_{p,q}^{n-2}\left( \left\langle \frac{z}{|z|},\frac{w}{%
|w|}\right\rangle \right) .
\end{split}%
\end{equation*}%
We now are in position to apply the addition formula for Zernike
polynomials due to \v{S}apiro \cite{Sapiro1968} and Koornwinder \cite%
{Koorn1972}: 
\begin{equation*}
\begin{split}
R_{k,l}^{\gamma }\left( z_{1}\bar{z}_{2}+(1-z_{1}\bar{z}_{1})^{\frac{1}{2}%
}(1-z_{2}\bar{z}_{2})^{\frac{1}{2}}y\right) &
=\sum_{p=0}^{k}\sum_{q=0}^{l}\tau_{p,q,\gamma -1}\frac{%
(-1)^{p+q}(-k)_{p}(-l)_{q}(l+\gamma +1)_{p}(k+\gamma +1)_{q}}{(\gamma
+1)_{p+q}(\gamma +1)_{p+q}} \\
& \times \left[ (1-z_{1}\bar{z}_{1})(1-z_{2}\bar{z}_{2})\right] ^{\frac{p+q}{%
2}}R_{k-p,l-q}^{\gamma +p+q}(z_{1})\overline{R_{k-p,l-q}^{\gamma +p+q}(z_{2})%
}R_{p,q}^{\gamma -1}(y),
\end{split}%
\end{equation*}%
where 
\begin{equation}
\tau_{p,q,\gamma}:=\frac{(p+q+\gamma +1)(\gamma +1)_{p}(\gamma +1)_{q}}{%
(\gamma +1)p!q!}.
\end{equation}%
Indeed, we specialize this formula for $z_{1}=\left( 1+|z|^{2}\right)^{-%
\frac{1}{2}}, \, z_2=\left( 1+|w|^{2}\right)^{-\frac{1}{2}},\ y=\left\langle 
\frac{z}{|z|},\frac{w}{|w|}\right\rangle ,\ k=m, \ l=m+2\nu $, and $\gamma
=n-1$ to obtain, after replacement, the expression
\begin{equation}\label{RK-R}
K_{\nu ,m}(z,w)=\beta _{\nu ,n,m}R_{m,m+2\nu }^{n-1}\left( \frac{1+\langle
z,w\rangle }{(1+|z|^{2})^{\frac{1}{2}}(1+|w|^{2})^{\frac{1}{2}}}\right) .
\end{equation}%
Using again \eqref{Zernike_poly}, equation \eqref{RK-R} can be rewritten as: 
\begin{equation}  \label{last_kern}
K_{\nu ,m}(z,w)=\frac{(2m+2\nu +n)\Gamma (m+2\nu +n)}{\pi ^{n}\Gamma (m+2\nu
+1)}\left( \frac{\left( 1+\langle w,z\rangle \right) ^{2}}{%
(1+|z|^{2})(1+|w|^{2})}\right) ^{\nu }P_{m}^{(n-1,2\nu )}\left( \frac{%
2|1+\langle z,w\rangle |^{2}}{(1+|z|^{2})(1+|w|^{2})}-1\right).
\end{equation}%
Recalling \eqref{d_FS}, we may also write \eqref{last_kern}
in the form \eqref{RK}. This ends the proof.
\end{proof}
\begin{remark}
For $\nu =0$, the Fubini-Study Laplacian $\Delta _{0}\equiv\Delta _{FS}$ has a discrete spectral decomposition
with eigenvalues $\beta_m =(-4m(m+n))_{m\geq 0}$. Besides, each eigenspace is finite-dimensional and has a
orthonormal basis given by homogeneous spherical harmonics of degree zero. Setting $\nu =0$ in \eqref{rep_kern}, then the kernel of the orthogonal projection from $L^{2}(\mathbb{C}^{n},d\mu _{n})$ onto the $m$-th eigenspace of $\Delta_{0}$ reduces to
\begin{equation}
K_{0,m}(z,w)=\pi ^{-n}\left( 2m+n\right) \frac{(m+n-1)!}{m!}%
P_{m}^{(n-1,0)}(\cos 2d_{FS}(z,w)),  \label{psi_n}
\end{equation}%
in agreement with the result of Koornwinder \cite[Theorem 3.8]{Koo}.
\end{remark}
\section{The heat equation associated with $\Delta _{\protect\nu }$}

We here consider the Cauchy problem 
\begin{equation}  \label{Cauchy_problem}
\partial_t\varphi (t,z)=\Delta _{\nu } \varphi(t,z), \quad
t>0, \ z\in\mathbb{C}^n,
\end{equation}
with the initial condition $\varphi(0,z)=g(z)$, $g\in \mathcal{E}^{\nu,\infty}:=\left(\bigoplus_{m=0}^{+\infty}\mathcal{A}_{m}^{\nu}\right)\cap L^{\infty}(\mathbb{C}^n)$. The solution is given by (\cite{H_2002}):
\begin{equation}\label{sol_CP}
\varphi(t,z)=\int_{\mathbb{C}^n}H_{\nu}(t,z,w)g(w)d\mu_n(w)
\end{equation}
where 
\begin{equation}  \label{heat_kernel_series}
H_{\nu }(t,z,w)=N_t^{(n,\nu)}(z,w)\sum\limits_{m=0}^{+\infty }(2m+2\nu
+n)\frac{\Gamma (m+2\nu +n)}{\Gamma (m+2\nu +1)}e^{-4t(m+\nu +\frac{n}{2}%
)^2}P_{m}^{(n-1,2\nu )}\left( \cos 2d_{FS}(z,w)\right)
\end{equation}
is the heat kernel associated with $\Delta_\nu$ with the prefactor 
\begin{equation}
N_t^{(n,\nu)}(z,w):=\pi ^{-n}\left(\frac{(1+\langle w,z\rangle )^2}{(1+|z|^{2})(1+|w|^{2})} \right)^{\nu }e^{4t(\nu^2+\frac{n^2}{4})} .
\end{equation}
Precisely, the following facts have been proved \cite{H_2002}:
\begin{enumerate}
\item[(i)] The series \eqref{heat_kernel_series} defining $H_{\nu }(t,z,w)$ is normally convergent. Further, for every $k\in%
\mathbb{Z}_+$ and every multi-indices $\alpha,\beta\in\mathbb{Z}_+^n$ we
have that 
\begin{equation}
\sum\limits_{m=0}^{+\infty}\sup\limits_{(z,w)\in\mathbb{C}^n\times\mathbb{C}%
^n}\left|\partial_t^k D_{z,w}^{\alpha ,\beta} \left[e^{\beta_m t}K_{\nu,m}(z,w)\right] \right|\leq
C(t) < +\infty
\end{equation}

\item[(ii)] The function $]0,+\infty [\times \mathbb{C}^n\times\mathbb{C}^n
\rightarrow \mathbb{C}, \ (t,z,w)\mapsto H_{\nu} (t,z,w)$ is $C^{\infty}$
and satisfies the equation \eqref{Cauchy_problem}.

\item[(iii)] To see that the function $\varphi(t,z)$ in \eqref{sol_CP} is a solution of the equation \eqref{Cauchy_problem}, one may write it as 
\begin{equation}
\varphi(t,z)=\sum\limits_{m=0}^{+\infty}e^{\beta_m t} \int_{\mathbb{C}%
^n}K_{\nu,m}(z,w)g(w)d\mu_n(w).
\end{equation}
\end{enumerate}
Recalling that $g\in \mathcal{E}^{\nu,\infty}$, it decomposes as 
\begin{equation}  \label{g_series_gm}
g(z)=\sum\limits_{m=0}^{+\infty}g_m(z), \quad g_m\in \mathcal{A}_{m}^{\nu}.
\end{equation}
Moreover, the component functions $\{g_m\}$ satisfy
\begin{equation}
\int_{\mathbb{C}^n}K_{\nu,m}(z,w)g(w)d\mu_n(w)=g_m(z).
\end{equation}
Since the spaces $\mathcal{A}_{m}^{\nu}$ are pairwise orthogonal in $L^2(%
\mathbb{C}^n,d\mu_n(z))$, it follows that 
\begin{equation}  \label{varphi_series_gm}
\varphi(t,z)=\sum\limits_{m=0}^{+\infty}e^{\beta_m t} g_m(z)
\end{equation}
from which one can prove that 
\begin{equation}
\lim\limits_{t\to 0} \Vert \varphi(t,\cdot)-g(\cdot)\Vert_{L^2(\mathbb{C}%
^n,d\mu_n(z))}=0.
\end{equation}
Finally, from \eqref{varphi_series_gm} and \eqref{g_series_gm} it is clear
that $\varphi(t,z)$ obey the initial condition $\varphi(0,z)=g(z)$.\newline
\begin{remark}
For $\nu =0$, the spectral theorem implies that for any suitable function $U$, the operator $U(\Delta_{0})$ is an integral operator whose kernel is given by
\begin{equation}
\sum\limits_{m=0}^{+\infty }U(-4m(m+n))K_{0,m}(z,w).  \label{sum_w_psi}
\end{equation}%
In particular, one can write the kernel of the heat semigroup $\exp\left(t\Delta _{0}\right)$ through the series expansion%
\begin{equation}
H_{0}\left( t;z,w\right) =\pi ^{-n}\sum_{m\geq 0}\left( 2m+n\right) \frac{%
(m+n-1)!}{m!}e^{-4m(m+n)t}P_{m}^{(n-1,0)}(\cos 2d_{FS}(z,w)),
\label{heat_kernel_nu=0}
\end{equation}
see \cite[p.873]{HI2002}.
\end{remark}

\section{An integral representation for $H_{\nu }(t,z,w)$}
In this section, we extend the formula \eqref{IRHK_HI} with respect to the parameter $\nu$ as follows.

\begin{theorem}
Let $n\geq 1$, $2\nu \in\mathbb{Z}_+$ and $\rho =d_{FS}(z,w)$. Then, 
\begin{equation}\label{IRHK-AMH}
\begin{split}
H_{\nu }(t,z,w)&=\frac{e^{4t(\nu ^{2}+\frac{n^{2}}{4})}}{2^{n+2\nu -1}\pi
^{n}\Gamma \left( 2\nu +\frac{1}{2}\right) }\left( \frac{(1+\langle
z,z\rangle )(1+\langle w,w\rangle )}{(1+\langle z,w\rangle )^{2}}\right)
^{\nu }\\
&\times \int_{\rho }^{\frac{\pi }{2}}\frac{-d(\cos u)}{\left( \cos ^{2}\rho
-\cos ^{2}u\right) ^{\frac{1}{2}-2\nu }}\left( -\frac{1}{\sin u}\frac{d}{du}%
\right) ^{n+2\nu }\left[ \Theta _{n+1,\nu }(t,u)\right] 
\end{split}
\end{equation}
where%
\begin{equation*}
\Theta _{n+1,\nu }(t,u):=\sum_{m=0}^{\infty }e^{-4t(m+\nu +\frac{n}{2}%
)^{2}}\cos (2m+2\nu +n)u.
\end{equation*}
\end{theorem}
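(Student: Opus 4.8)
The plan is to start from the spectral series \eqref{heat_kernel_series}, factor out the prefactor $N_t^{(n,\nu)}(z,w)$, and show that the remaining Jacobi series
\[
S:=\sum_{m=0}^{\infty}(2m+2\nu+n)\frac{\Gamma(m+2\nu+n)}{\Gamma(m+2\nu+1)}e^{-4t(m+\nu+\frac{n}{2})^{2}}P_{m}^{(n-1,2\nu)}(\cos 2\rho)
\]
coincides, up to an explicit geometric factor, with the integral in \eqref{IRHK-AMH}. Since the Gaussian weights $e^{-4t(m+\nu+n/2)^{2}}$ decay super-exponentially in $m$ and dominate every term-by-term $u$-derivative of the trigonometric series $\Theta_{n+1,\nu}(t,u)$, I may apply the operator $(-\tfrac{1}{\sin u}\tfrac{d}{du})^{n+2\nu}$ termwise and interchange the resulting series with the $u$-integral. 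This reduces the theorem to a single closed-form identity, valid for each fixed $m$, which evaluates the action of the integral operator on $\cos((2m+2\nu+n)u)$ as a multiple of $(\cos^{2}\rho)^{2\nu}P_{m}^{(n-1,2\nu)}(\cos 2\rho)$.

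The first simplification is the change of variable $x=\cos u$, which on $u\in[\rho,\tfrac{\pi}{2}]\subset(0,\pi)$ is a smooth diffeomorphism onto $[0,\cos\rho]$ with $\tfrac{d}{dx}=-\tfrac{1}{\sin u}\tfrac{d}{du}$; hence the operator becomes the ordinary derivative $\tfrac{d^{\,n+2\nu}}{dx^{\,n+2\nu}}$, while $\cos((2m+2\nu+n)u)=T_{N}(x)$ is a Chebyshev polynomial of the first kind of integer index $N=2m+2\nu+n$ (here $2\nu\in\mathbb{Z}_{+}$ is essential). Writing $s=\cos\rho$, one term of the integral equals $\int_{0}^{s}(s^{2}-x^{2})^{2\nu-\frac12}T_{N}^{(n+2\nu)}(x)\,dx$. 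I then invoke two classical identities: the derivative relation $T_{N}^{(n+2\nu)}(x)=2^{n+2\nu-1}N(n+2\nu-1)!\,C_{2m}^{(n+2\nu)}(x)$ (note $N-(n+2\nu)=2m$), and the quadratic transformation $C_{2m}^{(\lambda)}(x)=\tfrac{(\lambda)_{m}}{(\frac12)_{m}}P_{m}^{(\lambda-\frac12,-\frac12)}(2x^{2}-1)$ with $\lambda=n+2\nu$. After these substitutions the single term is reduced, up to explicit constants, to $\int_{0}^{s}(s^{2}-x^{2})^{2\nu-\frac12}P_{m}^{(n+2\nu-\frac12,-\frac12)}(2x^{2}-1)\,dx$.

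The heart of the argument, and the step I expect to be the \emph{main obstacle}, is the evaluation of this last integral. Substituting $\xi=2x^{2}-1$ turns it into the Euler/Erdélyi–Kober fractional integral
\[
\int_{-1}^{\eta}(\eta-\xi)^{\mu-1}(\xi+1)^{-\frac12}P_{m}^{(n+2\nu-\frac12,-\frac12)}(\xi)\,d\xi,\qquad \eta=2\cos^{2}\rho-1=\cos 2\rho,\ \ \mu=2\nu+\tfrac12,
\]
and the key lemma is the fractional-integration formula for Jacobi polynomials, which simultaneously lowers the first parameter by $\mu$ and raises the second by $\mu$ (their sum being conserved):
\[
\int_{-1}^{\eta}(\eta-\xi)^{\mu-1}(\xi+1)^{\beta}P_{m}^{(\alpha,\beta)}(\xi)\,d\xi=\Gamma(\mu)\frac{\Gamma(m+\beta+1)}{\Gamma(m+\beta+\mu+1)}(\eta+1)^{\beta+\mu}P_{m}^{(\alpha-\mu,\beta+\mu)}(\eta).
\]
With $\alpha=n+2\nu-\tfrac12$, $\beta=-\tfrac12$ and $\mu=2\nu+\tfrac12$ the output parameters are exactly $(\alpha-\mu,\beta+\mu)=(n-1,2\nu)$, so the integral produces $P_{m}^{(n-1,2\nu)}(\cos 2\rho)$; the choice $\mu=2\nu+\tfrac12$ is forced, and its gamma factor $\Gamma(\mu)=\Gamma(2\nu+\tfrac12)$ is precisely the one appearing in the prefactor of \eqref{IRHK-AMH}. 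This formula can be proved by inserting the terminating ${}_{2}F_{1}$ representation of $P_{m}^{(\alpha,\beta)}$ and integrating term by term against the beta-type kernel, or quoted from the standard tables of fractional integrals of Jacobi polynomials.

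It then remains to reassemble the constants. Collecting $2^{n+2\nu-1}N(n+2\nu-1)!$, the factor $\tfrac{(\lambda)_{m}}{(\frac12)_{m}}$, the Jacobian $2^{-(2\nu+1)}$ of $\xi=2x^{2}-1$, and $\Gamma(\mu)\tfrac{\Gamma(m+\frac12)}{\Gamma(m+2\nu+1)}$, the product simplifies via $(\lambda)_{m}(n+2\nu-1)!=\Gamma(m+2\nu+n)$ and $\Gamma(m+\tfrac12)/(\tfrac12)_{m}=\sqrt{\pi}$, so that each term reproduces exactly the coefficient $(2m+2\nu+n)\tfrac{\Gamma(m+2\nu+n)}{\Gamma(m+2\nu+1)}$ of $S$ times $P_{m}^{(n-1,2\nu)}(\cos 2\rho)$, with an overall factor carrying $(\eta+1)^{2\nu}=(2\cos^{2}\rho)^{2\nu}$. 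Finally, by \eqref{d_FS} one has $(\cos^{2}\rho)^{2\nu}=\big(|1+\langle z,w\rangle|^{2}/((1+\langle z,z\rangle)(1+\langle w,w\rangle))\big)^{2\nu}$, which combines with the phase $(1+\langle w,z\rangle)^{2\nu}$ coming from $N_t^{(n,\nu)}(z,w)$ to collapse into the single factor $\left(\frac{(1+\langle z,z\rangle)(1+\langle w,w\rangle)}{(1+\langle z,w\rangle)^{2}}\right)^{\nu}$ of \eqref{IRHK-AMH}. As a consistency check, setting $\nu=0$ (so $\mu=\tfrac12$, $\Gamma(\mu)=\sqrt{\pi}$) recovers the Hafoud–Intissar representation \eqref{IRHK_HI}. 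The only genuinely delicate points are the fractional-integration lemma above and the bookkeeping of the numerical constants, the convergence and the termwise operations being guaranteed by the Gaussian decay already underlying \eqref{heat_kernel_series}.
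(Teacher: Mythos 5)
Your route is, in substance, the paper's own proof traversed in the opposite direction, with its one black-box citation unpacked. The paper works term by term on the series \eqref{heat_kernel_series} and uses two ingredients: the Dijksma--Koornwinder representation \eqref{IR_Jac-Geg} of $P_m^{(n-1,2\nu)}(\cos 2\rho)$ as a weighted integral of the Gegenbauer polynomial $C_{2m}^{n+2\nu}$, and the identity expressing $C_{2m}^{n+2\nu}(\cos\psi)$ as $\left(-\tfrac{1}{\sin\psi}\tfrac{d}{d\psi}\right)^{n+2\nu}$ applied to $\cos((2m+2\nu+n)\psi)$. Your Chebyshev differentiation step is exactly the second ingredient written in the variable $x=\cos u$, and your quadratic transformation composed with the Bateman/Askey--Fitch fractional integral is exactly a derivation of the first (the forced choice $\mu=2\nu+\tfrac12$ is indeed the crux). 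All the individual identities you quote are correct, and the termwise manipulations are justified by the Gaussian decay, as in the paper. So this is not a different decomposition; its merit is that it replaces a cited formula by classical, independently checkable identities --- which turns out to matter.

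The genuine gap is your final paragraph, where you assert, without writing the computation, that the constants ``reproduce exactly'' the prefactor of \eqref{IRHK-AMH} \emph{and} that $\nu=0$ recovers \eqref{IRHK_HI}. These two claims contradict each other: at $\nu=0$ the stated theorem has prefactor $e^{n^2t}/(2^{n-1}\pi^{n}\Gamma(\tfrac12))=e^{n^2t}/(2^{n-1}\pi^{n+1/2})$, while \eqref{IRHK_HI} has $e^{n^2t}/(2^{n-2}\pi^{n+1})$; they differ by $\sqrt{\pi}/2$. Carrying out your own bookkeeping explicitly, with $N=2m+2\nu+n$ and $s=\cos\rho$,
\begin{equation*}
\int_0^{s}(s^2-x^2)^{2\nu-\frac12}\,T_N^{(n+2\nu)}(x)\,dx
= 2^{n+2\nu-2}\sqrt{\pi}\,\Gamma\left(2\nu+\tfrac12\right) N\,\frac{\Gamma(m+n+2\nu)}{\Gamma(m+2\nu+1)}\,\cos^{4\nu}\rho\;P_m^{(n-1,2\nu)}(\cos 2\rho),
\end{equation*}
the factors being $2^{n+2\nu-1}N(n+2\nu-1)!$, $\tfrac{(n+2\nu)_m}{(1/2)_m}$, the Jacobian $2^{-2\nu-1}$, the Bateman factor $\Gamma(\mu)\tfrac{\Gamma(m+1/2)}{\Gamma(m+2\nu+1)}$, and $(1+\eta)^{2\nu}=2^{2\nu}\cos^{4\nu}\rho$. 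Feeding this back into \eqref{heat_kernel_series}, your method proves \eqref{IRHK-AMH} with prefactor $e^{4t(\nu^2+n^2/4)}/\left(2^{n+2\nu-2}\pi^{n+\frac12}\Gamma(2\nu+\tfrac12)\right)$, i.e.\ $\tfrac{2}{\sqrt{\pi}}$ times the one printed in the theorem. This corrected constant is the one that reduces to \eqref{IRHK_HI} at $\nu=0$, and it can be confirmed independently at $n=1$, $\nu=0$ via the Mehler--Dirichlet formula. The mismatch is not a defect of your method: the paper's quoted formula \eqref{IR_Jac-Geg} is itself erroneous --- test it at $k=0$, where its right-hand side evaluates to $\Gamma(\tfrac12)=\sqrt{\pi}$ rather than $1$, so a factor $\Gamma(\tfrac12)$ is missing from the denominator --- and this error propagates into the stated theorem. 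In short: your outline is sound and, completed honestly, establishes a corrected version of the statement; but the claim that the constants reassemble to the theorem's prefactor is false, and the consistency check you yourself propose is precisely what detects it.
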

\begin{proof}
We first write the heat kernel in terms of the reproducing kernel \eqref{RK} as follows 
\begin{equation}\label{HK=sum(expK)}
\begin{split}
\quad H_{\nu }(t,z,w)&=\sum\limits_{m=0}^{+\infty }e^{\beta_{m}t}K_{\nu,m}(z,w)\\
&=\sum\limits_{m=0}^{+\infty }e^{\beta_{m}t}\frac{(2m+2\nu +n)\Gamma (m+2\nu +n)}{\pi^{n}\Gamma (m+2\nu +1)}\left( \frac{\left( 1+\langle w,z\rangle \right) ^{2}}{(1+|z|^{2})(1+|w|^{2})}\right) ^{\nu } \\
 & \qquad\qquad\qquad\times P_{m}^{(n-1,2\nu )}\left( \frac{2|1+\langle z,w\rangle |^{2}}{(1+|z|^{2})(1+|w|^{2})}-1\right) .
\end{split}
\end{equation}%
Next, we will use the integral representation for Jacobi polynomials in terms of Gegenbauer polynomials given in \cite[p.194]{DK1971} followed by the symmetry relation $P_{k}^{(\alpha ,\beta
)}(-x)=(-1)^{k}P_{k}^{(\beta ,\alpha )}(x)$, 
\begin{equation}\label{IR_Jac-Geg}
P_{k}^{(\alpha ,\beta )}(2t^{2}-1)=\frac{2\Gamma (\alpha +\beta +1)\Gamma
(k+\beta +1)}{\Gamma (\beta +\frac{1}{2})\Gamma (k+\alpha +\beta +1)}%
\int_{0}^{1}(1-u^{2})^{\beta -\frac{1}{2}}C_{2k}^{\alpha +\beta +1}(tu)du,
\end{equation}%
$Re\alpha >-\frac{1}{2},\ Re\beta >-\frac{1}{2}$ and $|t|<1$. Precisely, we set $t=\cos \rho $ and $u=\cos \psi /\cos \rho $, then we specialize $\alpha =n-1,\ \beta =2\nu$ and $ k=m$ in \eqref{IR_Jac-Geg} to get that 
\begin{equation}
P_{m}^{(n-1 ,2\nu )}(\cos 2\rho)=\frac{2\Gamma (n+2\nu )\Gamma (m+2\nu
+1)}{\Gamma (2\nu +\frac{1}{2})\Gamma (m+2\nu +n)}\frac{1}{\cos ^{4\nu }\rho}
\int_{\rho}^{\frac{\pi }{2}}\frac{\sin \psi }{\left( \cos ^{2}\rho-\cos
^{2}\psi \right) ^{\frac{1}{2}-2\nu }}C_{2m}^{n+2\nu }(\cos \psi
)d\psi .
\end{equation}%
By substituting the above representation in \eqref{HK=sum(expK)}, we obtain, after some calculations, the following
\begin{equation}  \label{Q_int}
\begin{split}
H_{\nu }(t,z,w)& =\frac{2\Gamma (n+2\nu )e^{4t(\nu^2+\frac{n^2}{4})}%
}{\Gamma (2\nu +\frac{1}{2})\cos ^{4\nu }\rho}D_{\nu }(z,w) \\
& \times \int_{\rho}^{\frac{\pi }{2}}\frac{\sin \psi }{\left( \cos
^{2}\theta -\cos ^{2}\psi \right) ^{\frac{1}{2}-2\nu }}\left(
\sum\limits_{m=0}^{+\infty }(2m+2\nu +n)C_{2m}^{n+2\nu }(\cos \psi
)e^{-4t(m+\nu +\frac{n}{2})^2}\right) d\psi ,
\end{split}%
\end{equation}
where 
\begin{equation*}
D_{\nu }(z,w):=\left(\frac{(1+\langle w,z\rangle )^2\cos^{-4}\rho}{(1+|z|^{2})(1+|w|^{2})} \right)^{\nu }.
\end{equation*}
Next, we use the formula \cite[p.876]{HI2002}: 
\begin{equation}
\left( -\frac{d}{\sin udu}\right) ^{l}\left( \frac{\sin (k+l+1)u}{\sin u}%
\right) =2^{l}l!C_{k}^{l+1}(\cos u),
\end{equation}%
for $l=n+2\nu ,\ k=2m,\ u=\psi $ to rewrite the series occurring in \eqref{Q_int} as 
\begin{equation}  \label{der_sum}
\frac{2^{1-n-2\nu }}{(n+2\nu -1)!}\left( -\frac{d}{\sin \psi d\psi }%
\right) ^{n-1+2\nu }\left( \sum\limits_{m=0}^{+\infty }\frac{(2m+2\nu
+n)\sin (2m+2\nu +n)\psi }{\sin \psi }e^{-4t(m+\nu +\frac{n}{2})^2}\right) .
\end{equation}
But since 
\begin{equation}
\frac{(2m+2\nu +n)\sin (2m+2\nu +n)\psi }{\sin \psi }=-\frac{d}{\sin
\psi d\psi }(\cos (2m+2\nu +n)\psi ).
\end{equation}
Equation \eqref{der_sum} may be presented as 
\begin{equation}  \label{der_theta}
\frac{2^{1-n-2\nu }}{(n+2\nu -1)!}\left( -\frac{d}{\sin \psi d\psi }%
\right) ^{n+2\nu }\left[ \Theta _{n+1,\nu }(t,\psi )\right]
\end{equation}
where 
\begin{equation}
\Theta _{n+1,\nu }(t,\psi )=\sum\limits_{m=0}^{+\infty}e^{-4t(m+\nu +\frac{n}{2})^2}\cos (2m+2\nu+n)\psi.
\end{equation}
Finally, by substituting \eqref{der_theta} into \eqref{Q_int} we arrive at the announced result \eqref{IRHK-AMH}.
\end{proof}

\section{Heat coefficients for $\Delta _{\protect\nu }$}
Here our goal is to establish an asymptotic expansion for the trace of $\exp\left(\frac{1}{4}t\Delta _{\nu }\right)$. For that, we start from the following spectral series representation of this trace 
\begin{equation}\label{HTF}
Tr\left( \exp\left(\frac{1}{4}t\Delta _{\nu }\right)\right) =e^{\left( \frac{n^{2}}{4}%
+\nu ^{2}\right) t}\sum\limits_{m=0}^{+\infty }\left( \dim _{\mathbb{C}}%
\mathcal{A}_{m}^{\nu }\right) e^{-\left( m+\frac{n}{2}+\nu \right) ^{2}t}, 
\end{equation}
where the dimension of $\mathcal{A}_{m}^{\nu }$ can be computed via the diagonal of the corresponding reproducing kernel as 
\begin{equation}\label{dim_A_m_nu}
\begin{split}
\dim _{\mathbb{C}}\mathcal{A}_{m}^{\nu } :=\int_{\mathbb{C}^n}K_{\nu,m}(z,z)d\mu_{n}(z)=\frac{(2m+n+2\nu )\Gamma
(m+n)\Gamma (m+n+2\nu )}{n\left( \Gamma (n)\right) ^{2}\Gamma (m+1)\Gamma
(m+2\nu +1)}.
\end{split}
\end{equation}
\begin{theorem}
The heat trace formula \eqref{HTF} satisfies the asymptotic expansion
\begin{equation}
Tr\left( \exp\left(\frac{1}{4}t\Delta _{\nu }\right)\right) \simeq \frac{1}{(4\pi t)^{n}}%
\sum\limits_{j=0}^{+\infty }b_{j}^{\left( \nu ,n\right) }\ t^{j},\text{ \  \ 
}t\searrow 0^{+} 
\end{equation}
with the heat coefficients given by
\begin{equation*}
b_{j}^{\left( \nu ,n\right) }=\frac{(4\pi)^n}{n!}
\sum\limits_{i=0}^{j}\frac{\left( \frac{n^{2}}{4}+\nu ^{2}\right) ^{j-i}}{\left( j-i\right) !}c_{i}^{\left( \nu ,n\right) },\text{ \ }
\end{equation*}
\begin{itemize}
\item[(i)] case $n$ odd
\begin{equation}
c_{i}^{\left( \nu ,n\right) }=\left\{ 
\begin{array}{c}
\frac{(n-i-1)!}{(n-1)!}\gamma _{n-i-1}^{(\nu ,n)},\quad\qquad\qquad\qquad\qquad 0\leq i\leq n-1 \\ 
\sum\limits_{p=0}^{n-1}\frac{(-1)^{i-n+1}c_{p}^{\left( \nu ,n\right)}}{(i-n)!(i-p)p!} B_{2(i-p)}\left(\nu+\frac{1}{2}\right)
,\quad i\geq n%
\end{array}%
\right.   
\end{equation}

\item[(ii)] case $n$ even
\begin{eqnarray}
\begin{split}
c_{i}^{\left( \nu ,n\right) }=\left\{
\begin{array}{c}
1,\qquad\qquad\qquad\qquad\qquad\qquad\qquad i=0\\
\tau _{n-1-i}^{(\nu ,n)}\frac{\left(n-i-1\right) !}{\left( n-1\right) !},\qquad\qquad\qquad\qquad\qquad 1\leq i\leq n-1\\
\sum\limits_{p=0}^{n-1}\frac{(-1)^{i-n}c_{p}^{\left( \nu ,n\right)}}{(i-n)!(i-p)(n-p-1)!}\left[ 2B_{2(i-p)}-B_{2(i-p)}(\nu)\right] ,\qquad i\geq n
\end{array}
\right.
\end{split}
\end{eqnarray}
\end{itemize} 
where $B_d$ and $B_d(\cdot)$ denote Bernoulli numbers and polynomials respectively (see Appendix \ref{appendix1}). The coefficients $\tau _{i}^{(\nu ,n)}$ and $\gamma_{i}^{(\nu ,n)}$ are defined by \eqref{dim_sum_1} and \eqref{dim_A_case2} respectively.
\end{theorem}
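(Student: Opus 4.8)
The plan is to collapse the spectral series \eqref{HTF} onto a finite family of one--dimensional Gaussian sums, expand each of these as $t\searrow 0^{+}$ by a Mellin/Hurwitz--zeta argument (equivalently, by Jacobi's modular inversion for $\vartheta_2,\vartheta_3$), and then convolve the result with the prefactor. I would begin with the multiplicity. Writing $\Gamma(m+n)/\Gamma(m+1)$ and $\Gamma(m+n+2\nu)/\Gamma(m+2\nu+1)$ as the two degree--$(n-1)$ Pochhammer products and putting $s:=m+\nu+\tfrac n2$ (so that $2m+n+2\nu=2s$), formula \eqref{dim_A_m_nu} becomes
\begin{equation*}
\dim_{\mathbb C}\mathcal A_m^\nu=\frac{2s}{n(\Gamma(n))^2}\prod_{k=1}^{n-1}\bigl(s+k-\tfrac n2+\nu\bigr)\bigl(s+k-\tfrac n2-\nu\bigr).
\end{equation*}
Reindexing $k\mapsto n-k$ shows the product is even in $s$, hence the whole right--hand side is an \emph{odd} polynomial in $s$ of degree $2n-1$, say $\sum_{\ell=0}^{n-1}q_\ell^{(\nu,n)}s^{2\ell+1}$. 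This is the decomposition announced in the introduction; its coefficients are the $\gamma_i^{(\nu,n)}$ of \eqref{dim_A_case2} (and, after a reorganisation adapted to an integer rather than a half--integer lattice, the $\tau_i^{(\nu,n)}$ of \eqref{dim_sum_1}), the two packagings corresponding to $n$ odd and $n$ even.

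Next I substitute this decomposition into \eqref{HTF}. Since $s^{2\ell}e^{-s^2t}=(-\partial_t)^\ell e^{-s^2t}$, the even part of each monomial becomes a $t$--derivative, while the single remaining linear factor is absorbed through differentiation with respect to the lattice characteristic, so that
\begin{equation*}
\sum_{m\ge0}\dim_{\mathbb C}\mathcal A_m^\nu\,e^{-s^2t}=\frac{1}{n(\Gamma(n))^2}\sum_{\ell=0}^{n-1}q_\ell^{(\nu,n)}\sum_{m\ge0}s^{2\ell+1}e^{-s^2t},
\end{equation*}
and each inner sum is a higher $t$--derivative of a one--sided Jacobi theta function, of $\vartheta_3$--type when $n$ is even and of $\vartheta_2$--type when $n$ is odd; this is the closed form in terms of $\vartheta_2,\vartheta_3$ and their derivatives promised in the abstract. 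For the rigorous expansion I would use, with $c=\nu+\tfrac n2$, the representation
\begin{equation*}
\sum_{m\ge0}(m+c)^{2\ell+1}e^{-(m+c)^2t}=\frac{1}{2\pi i}\int\Gamma(w)\,\zeta(2w-2\ell-1,c)\,t^{-w}\,dw,
\end{equation*}
with $\zeta(\cdot,c)$ the Hurwitz zeta function.

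The pole of $\zeta$ at argument $1$, i.e.\ $w=\ell+1$, gives the singular powers $t^{-(\ell+1)}$ ($\ell=0,\dots,n-1$); resumming them over $\ell$ produces the coefficients $c_i^{(\nu,n)}$ for $0\le i\le n-1$, which are exactly the $\gamma$-- and $\tau$--terms of the statement. The poles of $\Gamma$ at $w=-k$ give the regular powers $t^k$ with coefficient $\tfrac{(-1)^k}{k!}\zeta(-2k-2\ell-1,c)$, and $\zeta(-N,c)=-B_{N+1}(c)/(N+1)$ turns these into the even--index Bernoulli values $B_{2(k+\ell+1)}(c)$. The integer part of $c$ is removed by the translation $B_d(x+1)-B_d(x)=d\,x^{d-1}$ and the reflection $B_d(1-x)=(-1)^dB_d(x)$, which leaves the argument $\nu+\tfrac12$ in the odd case and the combination $2B_{2(i-p)}-B_{2(i-p)}(\nu)$ (with $2B_d=B_d(0)+B_d(1)$) in the even case, thereby producing $c_i^{(\nu,n)}$ for $i\ge n$. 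Finally I multiply by $e^{(\frac{n^2}4+\nu^2)t}=\sum_{r\ge0}\tfrac1{r!}(\tfrac{n^2}4+\nu^2)^rt^r$ and take the Cauchy product with $\sum_i c_i^{(\nu,n)}t^{i-n}$: the coefficient of $t^{j-n}$ is $\tfrac1{n!}\sum_{i=0}^j\tfrac{(\frac{n^2}4+\nu^2)^{j-i}}{(j-i)!}c_i^{(\nu,n)}$, and multiplying by $(4\pi)^n$ (from $1/(4\pi t)^n$) gives the stated $b_j^{(\nu,n)}$. Term--by--term differentiation and expansion are justified by the normal convergence recorded in item (i) of Section 5.

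The hard part is the third step: converting the Hurwitz--zeta (equivalently theta--derivative) expansion into the clean Bernoulli closed forms, and in particular separating the one--sided spectral sum $m\ge0$ from the two--sided theta series $\vartheta_2,\vartheta_3$. It is precisely this endpoint/reflection bookkeeping that forces the two parities apart and produces the arguments $\nu+\tfrac12$ versus $\nu$ together with the factor--two term $2B_{2(i-p)}$; by comparison, the polynomial decomposition and the final Cauchy product are routine.
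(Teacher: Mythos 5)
Your proposal reproduces the paper's architecture almost step for step: the rewriting of the multiplicity \eqref{dim_A_m_nu} as an odd polynomial in $s=m+\nu+\tfrac n2$, the reduction of \eqref{HTF} to one--sided lattice sums $\sum_{m\ge0}(m+c)^{2\ell+1}e^{-(m+c)^2t}$, and the closing Cauchy product with $e^{(\frac{n^2}{4}+\nu^2)t}$ are exactly the paper's steps (incidentally you cite the labels backwards: \eqref{dim_sum_1} defines the $\gamma$'s and \eqref{dim_A_case2} the $\tau$'s). The genuinely different ingredient is the analytic core: where the paper extends each lattice sum to a full $\vartheta_2$-- or $\vartheta_3$--type series plus a finite correction sum and then quotes the asymptotics \eqref{TJ_p_deri}, \eqref{D_V3} together with explicit Bernoulli summation formulas, you propose a single Mellin--Hurwitz representation and read off the singular part (pole of $\zeta(\cdot,c)$ at $w=\ell+1$) and the regular part (poles of $\Gamma$ at $w=-k$) at once. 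That substitution is sound, and arguably cleaner.

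The gap sits exactly at the step you yourself call the hard part, and it is not mere bookkeeping. The Mellin route gives regular coefficients proportional to $\zeta(-2k-2\ell-1,c)=-B_{2(k+\ell+1)}(c)/\bigl(2(k+\ell+1)\bigr)$ with $c=\nu+\tfrac n2$, and the passage from the argument $\nu+\tfrac n2$ to the stated arguments cannot be achieved by the generic identities you invoke. The translation $B_d(x+1)-B_d(x)=dx^{d-1}$ lowers the argument only at the price of monomial corrections at the intermediate points $x=\nu+\tfrac12,\dots,\nu+\tfrac n2-1$ (odd $n$), resp.\ $x=\nu,\dots,\nu+\tfrac n2-1$ (even $n$), and these corrections do \emph{not} cancel term by term. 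They cancel only after summation over $\ell$ against $\gamma_\ell^{(\nu,n)}$ (resp.\ $\tau_\ell^{(\nu,n)}$): the weight $1/(k+\ell+1)$ combines with $d=2(k+\ell+1)$ to leave $\tfrac{(-1)^{k+1}}{k!}\sum_x x^{2k+1}\bigl(\sum_\ell\gamma_\ell^{(\nu,n)}x^{2\ell}\bigr)$, which vanishes because those $x$ are precisely roots of the product \eqref{dim_prod_2}, resp.\ \eqref{dim_prod_case2}. This root--structure cancellation is the paper's lattice--extension trick in disguise, and your sketch never states it; without it the third step does not go through.

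There is a second, sharper problem in the even case: carried out correctly, your route yields regular coefficients $(-1)^{k+1}B_{2(k+p+1)}(\nu)$ (equivalently $-B_{2(k+p+1)}(\nu+\tfrac n2)$ modulo the vanishing terms), and no identity --- certainly not the tautology $2B_d=B_d(0)+B_d(1)$ --- converts this into the stated $(-1)^{k}\bigl[2B_{2(k+p+1)}-B_{2(k+p+1)}(\nu)\bigr]$; the two differ by $(-1)^k\,2B_{2(k+p+1)}\neq0$. Concretely, for $n=2$, $\nu=1$ one has $\dim_{\mathbb C}\mathcal A_m^{1}=r^3-r$ with $r=m+2$, and the constant term of $\sum_{m\ge0}(r^3-r)e^{-r^2t}$ is $\zeta(-3)-\zeta(-1)=\tfrac1{120}+\tfrac1{12}=+\tfrac{11}{120}$, which is what your Mellin expansion gives and what a numerical evaluation of the sum confirms, whereas the theorem's formula gives $c_2^{(1,2)}/2!=-\tfrac{11}{120}$. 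So your method, properly executed, does not prove the statement as printed; rather it exposes what appears to be a sign error in the paper's even case (traceable to \eqref{D_V3}). In any event, your assertion that the endpoint bookkeeping ``produces $c_i^{(\nu,n)}$ for $i\ge n$'' is precisely the part of the argument that is neither carried out nor correct as claimed.
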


\begin{proof} We may rewrite \eqref{dim_A_m_nu} as
\begin{equation}\label{dim_A_prod}
\dim _{\mathbb{C}}\mathcal{A}_{m}^{\nu }=\frac{2m+n+2\nu }{n!(n-1)!}%
\prod\limits_{j=1}^{n-1}(m+j)(m+2\nu +j) .
\end{equation}
Setting $r=r\left( m\right) =m+\nu +\frac{n}{2}$, the product in \eqref{dim_A_prod} also reads
\begin{equation}\label{wp}
\mathcal{\wp }=\prod\limits_{j=1}^{n-1}\left( r-\frac{n}{2}-\nu +j\right) \left( r-\frac{n}{%
2}+\nu +j\right) .
\end{equation}%
We treat the cases of $n$ odd and $n$ even separately as
they correspond to different choices of theta functions.\\

$\left( i\right) $\textbf{\ \textit{Case odd} $n\geq 0$.}  The product \eqref{wp} can also be presented as 
\begin{equation}\label{dim_prod_2}
\mathcal{\wp }=\prod\limits_{l=\frac{1}{2}+\nu }^{\frac{n}{2}+\nu -1}\left(
r^{2}-l^{2}\right) \prod\limits_{s=\frac{1}{2}-\nu }^{\frac{n}{2}-\nu
-1}\left( r^{2}-s^{2}\right) 
\end{equation}%
and further it can be decomposed as 
\begin{equation}\label{dim_sum_1}
\mathcal{\wp }=\sum\limits_{p=0}^{n-1}\gamma _{p}^{(\nu ,n)}r^{2p}. 
\end{equation}%
where the numbers $\gamma _{0}^{(\nu ,n)},...,\gamma _{n-1}^{(\nu ,n)}$ can
be computed explicitly (see Sect.\ref{Sect8}). By using \eqref{dim_sum_1},
the multiplicity in \eqref{dim_A_prod} takes the form

\begin{equation}\label{dim_A_sum_2}
\dim _{\mathbb{C}}\mathcal{A}_{m}^{\nu }=\frac{2}{n!(n-1)!}%
\sum\limits_{p=0}^{n-1}\gamma _{p}^{(\nu ,n)}\left( m+\nu +\frac{n}{2}%
\right) ^{2p+1}
\end{equation}%
which we may use to rewrite the trace formula \eqref{HTF} as
\begin{eqnarray}
\begin{split}
Tr\left( e^{\frac{1}{4}t\Delta _{\nu }}\right) &=e^{\left( \frac{n^{2}%
}{4}+\nu ^{2}\right) t}\sum\limits_{m=0}^{+\infty }\left[ \frac{2}{n!(n-1)!}%
\sum\limits_{p=0}^{n-1}\gamma _{p}^{(\nu ,n)}\left( m+\nu +\frac{n}{2}%
\right) ^{2p+1}\right] e^{-\left( m+\frac{n}{2}+\nu \right) ^{2}t} \\
&=\frac{e^{\left( \frac{n^{2}}{4}+\nu ^{2}\right) t}}{n!(n-1)!}\mathcal{S} 
\end{split}
\end{eqnarray}
with 
\begin{equation}
\mathcal{S}:=\sum\limits_{p=0}^{n-1}\gamma _{p}^{(\nu ,n)}\left(
2\sum\limits_{m=0}^{+\infty }\left( m+\nu +\frac{n}{2}\right)
^{2p+1}e^{-\left( m+\frac{n}{2}+\nu \right) ^{2}t}\right) .  
\end{equation}%
The latter sum can be cast into two parts as 
\begin{equation}\label{dim_double_sum}
\mathcal{S}=2\sum\limits_{\mu =\nu +\frac{1}{2}}^{+\infty }\mu e^{-\mu
^{2}t}\sum\limits_{p=0}^{n-1}\gamma _{p}^{(\nu ,n)}\mu
^{2p}-2\sum\limits_{\mu =\nu +\frac{1}{2}}^{\nu +\frac{n}{2}-1}\mu e^{-\mu
^{2}t}\sum\limits_{p=0}^{n-1}\gamma _{p}^{(\nu ,n)}\mu ^{2p}  
\end{equation}%
Recalling \eqref{dim_prod_2}-\eqref{dim_sum_1}, we see that for $\mu $ raining over the set $\{\nu +\frac{1}{2},\nu +\frac{3}{2},\cdots ,\nu +\frac{n}{2}-1\}$ the sum%
\begin{equation}
\sum\limits_{p=0}^{n-1}\gamma _{p}^{(\nu ,n)}\mu ^{2p}=0
\end{equation}
and therefore the sum \eqref{dim_double_sum} reduces to%
\begin{equation}\label{sum_S_R}
\mathcal{S}=2\sum\limits_{p=0}^{n-1}\gamma _{p}^{(\nu ,n)}\mathcal{R}%
_{p}^{\left( \nu \right) }\left( t\right) 
\end{equation}
where 
\begin{equation}
\mathcal{R}_{p}^{\left( \nu \right) }\left( t\right) =\sum\limits_{\mu =\nu +%
\frac{1}{2}}^{+\infty }\mu ^{2p+1}e^{-\mu ^{2}t} 
\end{equation}%
By setting $\mu =j+\frac{1}{2},$ we successively obtain
\begin{eqnarray}
\mathcal{R}_{p}^{\left( \nu \right) }\left( t\right) &=&\sum\limits_{j=\nu
}^{+\infty }\left( j+\frac{1}{2}\right) ^{2p+1}e^{-\left( j+\frac{1}{2}%
\right) ^{2}t} \nonumber \\
&=&\sum\limits_{j=0}^{+\infty }\left( j+\frac{1}{2}\right) ^{2p+1}e^{-\left( j+%
\frac{1}{2}\right) ^{2}t}-\sum\limits_{j=0}^{\nu -1}\left( j+\frac{1}{2}%
\right) ^{2p+1}e^{-\left( j+\frac{1}{2}\right) ^{2}t} \nonumber \\
&=&\frac{1}{2}\left( -\frac{d}{dt}\right) ^{p}\left[ \sum\limits_{j=0}^{+%
\infty }\left( 2j+1\right) e^{-\left( j+\frac{1}{2}\right) ^{2}t}\right]
-\sum\limits_{j=0}^{\nu -1}\left( j+\frac{1}{2}\right) ^{2p+1}e^{-\left( j+%
\frac{1}{2}\right) ^{2}t} \nonumber\\
&=&\frac{1}{2}\left( -\frac{d}{dt}\right) ^{p}\left[ \vartheta _{2}(t)\right]
-\sum\limits_{j=0}^{\nu -1}\left( j+\frac{1}{2}\right) ^{2p+1}e^{-\left( j+%
\frac{1}{2}\right) ^{2}t} \label{finite_sum}
\end{eqnarray}
where $\vartheta _{2}(t)$ denotes the Jacobi's theta function given by \cite[p.280]{M_1928}: 
\begin{equation}
\vartheta _{2}(t)=\sum\limits_{j=0}^{+\infty }\left( 2j+1\right) e^{-\left(
j+\frac{1}{2}\right) ^{2}t}  
\end{equation}%
while the finite sum in \eqref{finite_sum} may be expressed by the series%
\begin{equation}
\sum\limits_{\ell =0}^{+\infty }\sigma _{p}^{(\nu )}\left( \ell \right) \
t^{\ell }  
\end{equation}%
with coefficients 
\begin{equation}
\sigma _{p}^{\left( \nu \right) }\left( \ell \right) =\frac{(-1)\ell }{\ell !%
}\sum\limits_{j=0}^{\nu -1}\left( j+\frac{1}{2}\right) ^{2(p+\ell )+1}. 
\end{equation}%
Now, we use the formula \cite[p.597]{prud1}:
\begin{equation}
\sum\limits_{k=0}^{m}(k+a)^q=\frac{1}{q+1}\left[ B_{q+1}(m+1+a)-B_{q+1}(a)\right], \quad q=1,2,\cdots ,
\end{equation}
where $B_n(z)$ are the Bernoulli polynomials, for $m=\nu-1$, $q=2(\ell+p)+1$, $a=\frac{1}{2}$, to get 
\begin{equation}
\sigma _{p}^{\left( \nu \right) }\left( \ell \right) =\frac{(-1)^{\ell}}{2(p+\ell +1)\ell !}\left[ B_{2(p+\ell +1)}\left(\nu+\frac{1}{2}\right)-B_{2(p+\ell +1)}\left(\frac{1}{2}\right)\right]. 
\end{equation}
Summarizing the above calculations, then the sum in \eqref{sum_S_R} also reads
\begin{equation}\label{S_TJ}
\mathcal{S}=\sum\limits_{p=0}^{n-1}\gamma _{p}^{(\nu ,n)}\left[ \left( -%
\frac{d}{dt}\right) ^{p}\vartheta _{2}(t)+\sum\limits_{\ell =0}^{+\infty}\frac{(-1)^{\ell}}{(p+\ell +1)\ell !}\left[ B_{2(p+\ell +1)}\left(\frac{1}{2}\right)-B_{2(p+\ell +1)}\left(\nu+\frac{1}{2}\right)\right] \ t^{\ell }\right] 
\end{equation}
Applying now the asymptotic of the higher order derivative of $\vartheta _{2}(t)$
as $t\searrow 0^{+}$ (\cite[p.281]{M_1928}) :
\begin{equation}\label{TJ_p_deri}
\left( \frac{d}{dt}\right) ^{p}\vartheta _{2}(t)\simeq \frac{(-1)^{p}p!}{%
t^{1+p}}+\sum\limits_{s=0}^{+\infty }\frac{B_{s+p}t^{s}}{s!}  
\end{equation}%
in terms of Bernoulli numbers $\left( B_{d\text{\ }}\right) $ defined by the
recursion relation \cite[p.8]{Aw2019}:
\begin{equation}\label{Ber_num}
B_{d}=\frac{\left( -1\right) ^{d}}{d+1}\left( 1-2^{-2d-1}\right)B_{2d+2} ,%
\text{ \ }d=0,1,2,...  
\end{equation}%
By combining \eqref{Ber_num} and the formula \cite[p.777]{prud1}
\begin{equation}
B_d\left(\frac{1}{2}\right) =-(1-2^{1-d}) B_d,
\end{equation}
we deduce that
\begin{equation}\label{Ber(1/2)-formula}
B_{2(d+1)}\left(\frac{1}{2}\right) =(-1)^{d+1}(d+1)B_d.
\end{equation}
For $d=p+s$ \eqref{Ber(1/2)-formula} reads
\begin{equation}\label{Ber(1/2)}
B_{2(p+s+1)}\left(\frac{1}{2}\right) =(-1)^{p+s+1}(p+s+1)B_{p+s}.
\end{equation}
Substituting \eqref{TJ_p_deri} and \eqref{Ber(1/2)} into \eqref{S_TJ}, we get after simplifications
\begin{equation}
\mathcal{S} \simeq \sum\limits_{p=0}^{n-1}\gamma _{p}^{(\nu ,n)}\left[
p!t^{-1-p}+\sum\limits_{s=0}^{+\infty }\frac{(-1)^{s+1}B_{2(p+s+1)}\left(\nu+\frac{1}{2}\right)}{(p+s+1)s!}t^{s}\right] .
\end{equation}
Therefore, the heat trace asymptotic formula becomes 
\begin{equation}
\begin{split}
Tr\left( e^{\frac{1}{4}t\Delta _{\nu }}\right) &\simeq \frac{e^{\left( \frac{%
n^{2}}{4}+\nu ^{2}\right) t}}{n!(n-1)!}\sum\limits_{p=0}^{n-1}\gamma
_{p}^{(\nu ,n)}\left[ \frac{p!}{t^{1+p}}+\sum\limits_{s=0}^{+\infty }
\frac{(-1)^{s+1}B_{2(p+s+1)}\left(\nu+\frac{1}{2}\right)}{(p+s+1)s!}
 t^{s}\right] \\
&= \frac{e^{\left( \frac{n^{2}}{4}+\nu ^{2}\right) t}}{n!(n-1)!}\left[ \sum\limits_{p=0}^{n-1}\gamma
_{p}^{(\nu ,n)}\frac{p!}{t^{1+p}}+\sum\limits_{p=0}^{n-1}\gamma
_{p}^{(\nu ,n)}\sum\limits_{s=0}^{+\infty }
\frac{(-1)^{s+1}B_{2(p+s+1)}\left(\nu+\frac{1}{2}\right)}{(p+s+1)s!}
 t^{s}\right] \\
& =\frac{\omega _{n}e^{\left( \frac{n^{2}}{4}+\nu ^{2}\right) t}}{(4\pi t)^{n}}%
\left[ 1+\sum\limits_{p=0}^{n-2}\frac{\gamma _{p}^{(\nu ,n)}p!}{(n-1)!}%
t^{n-p-1}+\sum\limits_{p=0}^{n-1}\frac{\gamma _{p}^{(\nu ,n)}}{(n-1)!}\sum\limits_{s=p}^{+\infty }\frac{(-1)^{s-p+1}B_{2(s+1)}\left(\nu+\frac{1}{2}\right)}{(s+1)(s-p)!} t^{n+s-p}\right]   \label{asym_TF_3part}
\end{split}
\end{equation}
The first sum in \eqref{asym_TF_3part} can also be written as%
\begin{equation}
\sum\limits_{d=0}^{n-1}\frac{\gamma _{n-1-d}^{(\nu ,n)}}{(n-1)!}\left(
n-1-d\right) !t^{d}  
\end{equation}%
while the second one also reads%
\begin{equation}
\sum\limits_{d=n}^{+\infty }\left( \sum\limits_{p=0}^{n-1}\frac{(-1)^{d-n+1}\gamma_{p}^{(\nu ,n)} B_{2(d-n+p+1)}\left(\nu+\frac{1}{2}\right)}{(n-1)!(d-n+p+1)(d-n)!} \right) t^{d}  .
\end{equation}
By introducing 
\begin{equation}
\quad \Omega _{p}^{\left( \nu \right) }\left( k\right) :=\frac{(-1)^{k+1} B_{2(k+p+1)}\left(\nu+\frac{1}{2}\right)}{(k+p+1)k!} ,\qquad  k=0,1,2,...,
\end{equation}
we may rewrite \eqref{asym_TF_3part} as%
\begin{equation}
Tr\left( e^{\frac{1}{4}t\Delta _{\nu }}\right) \simeq \frac{\omega
_{n}e^{\left( \frac{n^{2}}{4}+\nu ^{2}\right) t}}{(4\pi t)^{n}}\left[
1+\sum\limits_{d=1}^{n-1}\frac{\gamma _{n-d-1}^{(\nu ,n)}(n-d-1)!}{(n-1)!}%
t^{d}+\sum\limits_{d=n}^{+\infty }\left( \sum\limits_{p=0}^{n-1}\frac{\gamma
_{p}^{(\nu ,n)}}{(n-1)!}\Omega _{p}^{\left( \nu \right) }\left( d-n\right)
\right) t^{d}\right]  
\end{equation}

\begin{equation}\label{trace_prod_1}
=\frac{\omega _{n}}{(4\pi t)^{n}}\left[ \sum\limits_{d=0}^{+\infty
}c_{d}^{\left( \nu ,n\right) }\ t^{d}\right] \left[ \sum\limits_{i=0}^{+%
\infty }\frac{1}{i!}\left( \frac{n^{2}}{4}+\nu ^{2}\right) ^{i}\right]  
\end{equation}%
where the coefficients $c_{d}^{\left( \nu ,n\right) }$are given by  
\begin{equation}\label{coef_c_d}
c_{d}^{\left( \nu ,n\right) }=\left\{ 
\begin{array}{c}
\frac{(n-d-1)!}{(n-1)!}\gamma _{n-d-1}^{(\nu ,n)},\quad 0\leq d\leq n-1 \\ 
\sum\limits_{p=0}^{n-1}\frac{\gamma_{p}^{(\nu ,n)}}{(n-1)!}\Omega _{p}^{\left( \nu \right) }\left( d-n\right),\quad d\geq n%
\end{array}
\right.  
\end{equation}%
Since $p\in \{0,\cdots, n-1\}$ in the second expression of \eqref{coef_c_d}, we may Write
\begin{equation}
\gamma _{p}^{(\nu ,n)}=\frac{(n-1)!}{p!}c_{n-p-1}^{\left( \nu ,n\right)}, \qquad 0\leq p \leq n-1 
\end{equation}
and then, the expression of $c_{d}^{\left( \nu ,n\right) }$ can be simplify as follows
\begin{equation}
c_{d}^{\left( \nu ,n\right) }=\left\{ 
\begin{array}{c}
\frac{(n-d-1)!}{(n-1)!}\gamma _{n-d-1}^{(\nu ,n)},\quad 0\leq d\leq n-1 \\ 
\sum\limits_{p=0}^{n-1}\frac{c_{n-p-1}^{\left( \nu ,n\right)}}{p!}\Omega _{p}^{\left( \nu \right) }\left( d-n\right)
,\quad d\geq n.
\end{array}%
\right.  
\end{equation}
This means that the coefficients $c_{d}^{\left( \nu ,n\right) }$ satisfy a $n$-terms recurrence relation when $d\geq n$.\\

$\left( i\right) $\textbf{\ \textit{Case even}} $n\geq 2$. In this case, the product \eqref{wp} can also be presented as
\begin{equation}\label{dim_prod_case2}
\mathcal{\wp }'=\prod\limits_{l=\nu }^{\frac{n}{2}+\nu -1}\left(
r^{2}-l^{2}\right) \prod\limits_{s=1-\nu }^{\frac{n}{2}-\nu
-1}\left( r^{2}-s^{2}\right) \qquad \text{(product omitted for $n = 2$)}
\end{equation}%
and further it can be decomposed as 
\begin{equation}\label{dim_A_case2}
\mathcal{\wp }'=\sum\limits_{p=0}^{n-1}\tau _{p}^{(\nu ,n)}r^{2p}.  
\end{equation}
where the numbers $\tau _{0}^{(\nu ,n)},...,\tau _{n-1}^{(\nu ,n)}$ can be
computed explicitly (see Sect.\ref{Sect8}). By using \eqref{dim_A_case2},
the multiplicity in \eqref{dim_A_prod} takes the form

\begin{equation}\label{dim_A_sum_case2}
\dim _{\mathbb{C}}\mathcal{A}_{m}^{\nu }=\frac{2}{n!(n-1)!}%
\sum\limits_{p=0}^{n-1}\tau _{p}^{(\nu ,n)}\left( m+\nu +\frac{n}{2}\right) ^{2p+1}
\end{equation}
We now may insert the r.h.s of \eqref{dim_A_case2} into the trace formula \eqref{HTF} to get 
\begin{equation}
\begin{split}
Tr\left( e^{\frac{1}{4}t\Delta _{\nu }}\right) &= e^{\left( \frac{n^{2}%
}{4}+\nu ^{2}\right) t}\sum\limits_{m=0}^{+\infty }\left[ \frac{2}{n!(n-1)!}%
\sum\limits_{p=0}^{n-1}\tau _{p}^{(\nu ,n)}\left( m+\nu +\frac{n}{2}\right)^{2p+1}\right] e^{-\left( m+\nu +\frac{n}{2}\right)^{2}t}\\
&=\frac{2 e^{\left( \frac{n^{2}}{4}+\nu ^{2}\right) t}}{n!(n-1)!}\sum\limits_{\mu=\nu+\frac{n}{2}}^{+\infty }e^{-\mu^{2}t}\sum\limits_{p=0}^{n-1}\tau _{p}^{(\nu ,n)}\mu^{2p+1} \\
&=\frac{ e^{\left( \frac{n^{2}}{4}+\nu ^{2}\right) t}}{n!(n-1)!}\left[ 2\sum\limits_{\mu=1}^{+\infty }e^{-\mu^{2}t}\sum\limits_{p=0}^{n-1}\tau _{p}^{(\nu ,n)}\mu^{2p+1}-2\sum\limits_{\mu=1}^{\nu+\frac{n}{2}-1}e^{-\mu^{2}t}\sum\limits_{p=0}^{n-1}\tau _{p}^{(\nu ,n)}\mu^{2p+1}\right] \\
&=\frac{ e^{\left( \frac{n^{2}}{4}+\nu ^{2}\right) t}}{n!(n-1)!}\left[S_1-S_2\right].
\end{split} 
\end{equation}%
Now, recall the Jacobi's theta function $\left( \left[ ..\right] \text{, p..}%
\right) :$ 
\begin{equation}
\vartheta _{3}\left( t\right) :=2\sum\limits_{l=1}^{+\infty }le^{-l^{2}t} 
\end{equation}
Then one can see that the $p$ derivative of $\vartheta _{3}\left(
t\right) $ satisfies 
\begin{equation}
\left( -1\right) ^{p}\left( \frac{d}{dt}\right) ^{p}\left[ \vartheta
_{3}(t)\right] =2\sum\limits_{l=1}^{+\infty }l^{2p+1}e^{-l^{2}t} 
\end{equation}%
Therefore,
\begin{equation}\label{R_t_case2}
S_1=\sum\limits_{p=0}^{n-1}\tau _{p}^{(\nu ,n)}\left( -1\right)
^{p}\left( \frac{d}{dt}\right) ^{p}\left[ \vartheta _{3}(t)\right].
\end{equation}
From \eqref{dim_prod_case2} and \eqref{dim_A_case2} we see that for $\mu $ raning over the set $\{\nu ,\nu +1,\cdots ,\nu +\frac{n}{2}-1\}$ the sum
\begin{equation}
\sum\limits_{p=0}^{n-1}\tau _{p}^{(\nu ,n)}\mu ^{2p}=0,
\end{equation}
and therefore $S_2$ reduces to%
\begin{equation}
S_2=2\sum\limits_{\mu=1}^{\nu-1}e^{-\mu^{2}t}\sum\limits_{p=0}^{n-1}\tau _{p}^{(\nu ,n)}\mu^{2p+1}.
\end{equation}%
The above sum can be expressed as follows
\begin{equation}
\begin{split}
S_2&=2\sum\limits_{p=0}^{n-1}\tau _{p}^{(\nu ,n)}\sum\limits_{\mu=1}^{\nu-1}e^{-\mu^{2}t}\mu^{2p+1}\\
&=2\sum\limits_{p=0}^{n-1}\tau _{p}^{(\nu ,n)}\sum\limits_{\ell =0}^{+\infty }\varrho _{p}^{(\nu )}\left( \ell \right) \
t^{\ell }
\end{split}
\end{equation}%
with coefficients 
\begin{equation}
\varrho_{p}^{\left( \nu \right) }\left( \ell \right) =\frac{(-1)\ell }{\ell !%
}\sum\limits_{\mu=1}^{\nu -1}\mu ^{2(p+\ell )+1}. 
\end{equation}%
Now, we use the formula \cite[p.596]{prud1}:
\begin{equation}
\sum\limits_{k=1}^{m}k^q=\frac{1}{q+1}\left[ B_{q+1}(m+1)-B_{q+1}\right], \quad q=1,2,\cdots ,
\end{equation}
for $m=\nu-1$, $q=2(\ell+p)+1$, to get 
\begin{equation}\label{rho(l)}
\varrho_{p}^{\left( \nu \right) }\left( \ell \right) =\frac{(-1)^{\ell}}{2(p+\ell +1)\ell !}\left[ B_{2(p+\ell +1)}\left(\nu\right)-B_{2(p+\ell +1)}\right]. 
\end{equation}
Therefore, the trace formula becomes
\begin{equation}\label{HTF_2sum_case2}
\mathcal{T}:=Tr\left( e^{\frac{1}{4}t\Delta _{\nu }}\right) =\frac{e^{\left( \frac{n^{2}}{%
4}+\nu ^{2}\right) t}}{n!(n-1)!}\sum\limits_{p=0}^{n-1}\tau _{p}^{(\nu ,n)}%
\left[ \left( -1\right) ^{p}\left( \frac{d}{dt}\right) ^{p}\left[
\vartheta _{3}(t)\right] -\sum\limits_{\ell =0}^{+\infty }\frac{(-1)^{\ell}\left[ B_{2(p+\ell +1)}\left(\nu\right)-B_{2(p+\ell +1)}\right]}{(p+\ell +1)\ell !} \
t^{\ell } \right]  
\end{equation}
Applying now the asymptotic of the higher order derivative of $\vartheta _{3}(t)$
as $t\searrow 0^{+}$ \cite[p.11]{Awon2019} :
\begin{equation}\label{D_V3}
\left( \frac{d}{dt}\right) ^{\ell }\vartheta _{3}(t)\simeq \frac{(-1)^{\ell
}\ell !}{t^{1+\ell }}+\sum\limits_{j=\ell }^{+\infty }\frac{\left( -1\right) ^{j}B_{2j+2}}{(j+1)(j-\ell)!}t^{j-\ell }  
\end{equation}%
where $\left( B_{d}\right) $ are Bernoulli numbers in \eqref{Ber_num}, substituting \eqref{D_V3} into \eqref{HTF_2sum_case2}, we get

\begin{eqnarray*}
\begin{split}
\mathcal{T} &\simeq \frac{e^{\left( \frac{n^{2}}{4}+\nu ^{2}\right) t}}{n!(n-1)!}\sum\limits_{p=0}^{n-1}\tau
_{p}^{(\nu ,n)}\left[\frac{p !}{t^{1+p}}+\left( -1\right)^{p}\sum\limits_{j=p}^{+\infty }\frac{\left( -1\right) ^{j}B_{2j+2}}{(j+1)(j-p)!}t^{j-p }   -\sum\limits_{\ell =0}^{+\infty }\frac{(-1)^{\ell}\left[ B_{2(p+\ell +1)}\left(\nu\right)-B_{2(p+\ell +1)}\right]}{(p+\ell +1)\ell !} \ t^{\ell } \right]\\
&= \frac{e^{\left( \frac{n^{2}}{4}+\nu ^{2}\right) t}}{n!(n-1)!}\sum\limits_{p=0}^{n-1}\tau
_{p}^{(\nu ,n)}\left[\frac{p !}{t^{1+p}}+\sum\limits_{\ell=0}^{+\infty }\frac{\left( -1\right) ^{\ell}B_{2(\ell +p+1)}}{(\ell+p+1)\ell !}t^{\ell }   -\sum\limits_{\ell =0}^{+\infty }\frac{(-1)^{\ell}\left[ B_{2(p+\ell +1)}\left(\nu\right)-B_{2(p+\ell +1)}\right]}{(p+\ell +1)\ell !} \ t^{\ell } \right]  \\
&= \frac{e^{\left( \frac{n^{2}}{4}+\nu ^{2}\right) t}}{n!(n-1)!}\sum\limits_{p=0}^{n-1}\tau
_{p}^{(\nu ,n)}\left[\frac{p !}{t^{1+p}}+\sum\limits_{\ell =0}^{+\infty }\frac{(-1)^{\ell}\left[ 2B_{2(p+\ell +1)}-B_{2(p+\ell +1)}\left(\nu\right)\right]}{(p+\ell +1)\ell !} \ t^{\ell } \right]  
\end{split}
\end{eqnarray*}
We denote $Vol\left(\textbf{P}^n(\mathbb{C})\right)=\omega_n=(4\pi)^n/n!$ and we use the fact that $\tau_{n-1}^{(\nu ,n)}=1$, to write
\begin{eqnarray*}
\begin{split}
\mathcal{T} &\simeq \frac{\omega_n e^{\left( \frac{n^{2}}{4}+\nu ^{2}\right) t}}{(4\pi t)^n}\sum\limits_{p=0}^{n-1}\frac{\tau_{p}^{(\nu ,n)}}{(n-1)!}\left[p! t^{n-p-1}+\sum\limits_{\ell=0}^{+\infty }\frac{(-1)^\ell}{(p+\ell+1)\ell !}\left[ 2B_{2(p+\ell+1)}-B_{2(p+\ell+1)}(\nu)\right]  \ t^{n+\ell } \right] \\
&= \frac{\omega_n e^{\left( \frac{n^{2}}{4}+\nu ^{2}\right) t}}{(4\pi t)^n}\left[1+\sum\limits_{p=0}^{n-2}\frac{\tau_{p}^{(\nu ,n)}p !}{(n-1)!} t^{n-p-1}+\sum\limits_{p=0}^{n-1}\frac{\tau_{p}^{(\nu ,n)}}{(n-1)!}\sum\limits_{\ell=0}^{+\infty }\frac{(-1)^\ell}{(p+\ell+1)\ell !}\left[ 2B_{2(p+\ell+1)}-B_{2(p+\ell+1)}(\nu)\right]  \ t^{n+\ell } \right] \\
&= \frac{\omega_n e^{\left( \frac{n^{2}}{4}+\nu ^{2}\right) t}}{(4\pi t)^n}\left[1+T_1+T_2 \right] 
\end{split}
\end{eqnarray*}
For $T_1$, we have
\begin{equation}
T_1=\sum\limits_{p=0}^{n-2}\frac{\tau_{p}^{(\nu ,n)}p !}{(n-1)!} t^{n-p-1}=\sum\limits_{d=1}^{n-1}\frac{\tau_{n-d-1}^{(\nu ,n)}\left(n-d-1\right) !}{(n-1)!} t^{d}.
\end{equation}
For $T_2$, we have
\begin{eqnarray}
\begin{split}
S_2 &=\sum\limits_{p=0}^{n-1}\frac{\tau_{p}^{(\nu ,n)}}{(n-1)!}\sum\limits_{\ell=0}^{+\infty }\frac{(-1)^\ell}{(p+\ell+1)\ell !}\left[ 2B_{2(p+\ell+1)}-B_{2(p+\ell+1)}(\nu)\right]  \ t^{n+\ell }\\
&=\sum\limits_{p=0}^{n-1}\frac{\tau_{p}^{(\nu ,n)}}{(n-1)!}\sum\limits_{d=n}^{+\infty }\frac{(-1)^{d-n}}{(p+d-n+1)(d-n)!}\left[ 2B_{2(p+d-n+1)}-B_{2(p+d-n+1)}(\nu)\right]  \ t^{d }\\
&=\sum\limits_{d=n}^{+\infty }\left(\sum\limits_{p=0}^{n-1}\frac{(-1)^{d-n}\tau_{p}^{(\nu ,n)}\left[ 2B_{2(p+d-n+1)}-B_{2(p+d-n+1)}(\nu)\right]}{(n-1)!(p+d-n+1)(d-n)!} \right) \ t^{d }
\end{split}
\end{eqnarray}
Therefore
\begin{eqnarray*}
\begin{split}
\mathcal{T} &\simeq  \frac{\omega_n e^{\left( \frac{n^{2}}{4}+\nu ^{2}\right) t}}{(4\pi t)^n}\left[1+\sum\limits_{d=1}^{n-1}\frac{\tau_{n-d-1}^{(\nu ,n)}\left(n-d-1\right) !}{(n-1)!} t^{d}+\sum\limits_{d=n}^{+\infty }\left(\sum\limits_{p=0}^{n-1}\frac{(-1)^{d-n}\tau_{p}^{(\nu ,n)}\left[ 2B_{2(p+d-n+1)}-B_{2(p+d-n+1)}(\nu)\right]}{(n-1)!(p+d-n+1)(d-n)!} \right) \ t^{d } \right] 
\end{split}
\end{eqnarray*}
which can also be written as
\begin{equation}\label{trace_exp-sum}
\frac{\omega _{n}}{(4\pi t)^{n}}e^{\left( \frac{n^{2}}{4}+\nu ^{2}\right)
t}\sum\limits_{i=0}^{+\infty }c_{i}^{\left( \nu ,n\right) }t^{i}
\end{equation}%
where
\begin{eqnarray}
\begin{split}
c_{i}^{\left( \nu ,n\right) }=\left\{
\begin{array}{c}
1,\qquad\qquad\qquad\qquad\qquad\qquad\qquad i=0\\
\tau _{n-1-i}^{(\nu ,n)}\frac{\left(n-i-1\right) !}{\left( n-1\right) !},\qquad\qquad\qquad\qquad\qquad 1\leq i\leq n-1\\
\sum\limits_{p=0}^{n-1}\frac{(-1)^{i-n}\tau_{p}^{(\nu ,n)}\left[ 2B_{2(p+i-n+1)}-B_{2(p+i-n+1)}(\nu)\right]}{(n-1)!(p+i-n+1)(i-n)!} ,\qquad i\geq n
\end{array}
\right.
\end{split}
\end{eqnarray}
Since in the last expression of $c_{i}^{\left( \nu ,n\right) }$ where $i\geq n$, we have $p\in \{0,\cdots, n-1\}$ we may write
\begin{equation}
\gamma _{p}^{(\nu ,n)}=\frac{(n-1)!}{p!}c_{n-p-1}^{\left( \nu ,n\right)}, \qquad 0\leq p \leq n-1 
\end{equation}
and then, the expression of $c_{i}^{\left( \nu ,n\right) }$ simplifies as follows
\begin{eqnarray}
\begin{split}
c_{i}^{\left( \nu ,n\right) }=\left\{
\begin{array}{c}
1,\qquad\qquad\qquad\qquad\qquad\qquad\qquad i=0\\
\tau _{n-1-i}^{(\nu ,n)}\frac{\left(n-i-1\right) !}{\left( n-1\right) !},\qquad\qquad\qquad\qquad\qquad 1\leq i\leq n-1\\
\sum\limits_{p=0}^{n-1}\frac{(-1)^{i-n}c_{n-p-1}^{\left( \nu ,n\right)}\left[ 2B_{2(p+i-n+1)}-B_{2(p+i-n+1)}(\nu)\right]}{p!(p+i-n+1)(i-n)!} ,\qquad i\geq n.
\end{array}
\right.
\end{split}
\end{eqnarray}
Therefore, we may rewrite \eqref{trace_exp-sum} as%
\begin{equation}\label{trace_prod_2}
Tr\left( e^{\frac{1}{4}t\Delta _{\nu }}\right) \simeq \frac{\omega _{n}}{%
(4\pi t)^{n}}\left[ \sum\limits_{k=0}^{+\infty }c_{k}^{\left( \nu ,n\right)
}\ t^{k}\right] \left[ \sum\limits_{j=0}^{+\infty }\frac{1}{j!}\left( \frac{%
n^{2}}{4}+\nu ^{2}\right) ^{j}\right] .
\end{equation}%
Finally, in both cases the r.h.s of \eqref{trace_prod_1} and \eqref{trace_prod_2} may also be seen as a Cauchy product of two power series as   
\begin{equation*}
Tr\left( e^{\frac{1}{4}t\Delta _{\nu }}\right) \simeq\frac{1}{(4\pi t)^{n}}%
\sum\limits_{d=0}^{+\infty }\left[ \omega _{n}\sum\limits_{i=0}^{d}\frac{%
\left( n^{2}+4\nu ^{2}\right) ^{d-i}}{4^{d-i}\left( d-i\right) !}%
c_{i}^{\left( \nu ,n\right) }\right] t^{d}.
\end{equation*}
This ends the proof of the theorem.
\end{proof}

\section{The coefficient $b_j^{(\nu , n)}$ for $n=1,2,3,4$}\label{Sect8}
In this section, we exhibit the heat coefficients $b_{j}^{\left( \nu ,n\right) }$ for the special cases $n=1,2,3,4$. For that, we first compute the coefficients $\gamma_i^{(\nu,n)}$ and $\tau_i^{(\nu,n)}$.
\begin{tabbing}
$\gamma_0^{(\nu,1)}=1$ \\
$\tau_0^{(\nu,2)}=-\nu^2$, $\tau_1^{(\nu,2)}=1$\\
$\gamma_0^{(\nu,3)}=-2\left(\frac{1}{4}+\nu^2\right)$, $\gamma_1^{(\nu,3)}$, $\gamma_2^{(\nu,3)}=1$\\
$\tau_0^{(\nu,4)}=\nu (\nu^2-1)$, $\tau_1^{(\nu,4)}=-\nu^2+2\nu+1$, $\tau_2^{(\nu,4)}=-\nu-2$, $\tau_3^{(\nu,4)}=1$.
\end{tabbing}

\begin{itemize}
\item For $n=1$, we have that
\begin{equation}\label{b_j^1}
b_{j}^{\left( \nu ,1\right) }=4\pi\left[\frac{\left( \frac{1}{4}+\nu ^{2}\right) ^{j}}{j !}+\sum\limits_{i=1}^{j}\frac{(-1)^i\left( \frac{1}{4}+\nu ^{2}\right) ^{j-i}}{\left( j-i\right) !i!} B_{2i}(\nu+1/2)\right]
\end{equation}
If $\nu=0$, then \eqref{b_j^1} reduces to 
\begin{equation*}
b_{j}^{\left( 0 ,1\right) }=4\pi\sum\limits_{i=0}^{j}\frac{\left( \frac{1}{4}\right) ^{j-i}u_i^1}{\left( j-i\right) !}
\end{equation*}
with
\begin{equation}
u_0^1=1, \quad u_i^1=\frac{B_{i-1}}{(i-1)!}, \ i\geq 1
\end{equation}
in agreement with the results obtained in \cite[p.8]{Awon2019}.
\item For $n=2$, we have that
\begin{equation}\label{b_j^1}
b_{j}^{\left( \nu ,2\right) }=8\pi^2\left[\frac{\left( 1+\nu ^{2}\right) ^{j}}{j!}-\frac{\nu^2\left( 1+\nu ^{2}\right) ^{j-1}}{\left( j-1\right) !}+\sum\limits_{i=2}^{j}\frac{\left( 1+\nu ^{2}\right) ^{j-i}}{\left( j-i\right) !} c_{i}^{(\nu,2)}\right]
\end{equation}
If $\nu=0$, then \eqref{b_j^1} reduces to 
\begin{equation*}
b_{j}^{\left( 0 ,2\right) }=8\pi^2\sum\limits_{i=0}^{j}\frac{\left( \frac{1}{4}\right) ^{j-i}u_i^2}{\left( j-i\right) !}
\end{equation*}
with
\begin{equation}
u_0^2=1, \quad u_1^2=0, \quad u_i^2=\frac{(-1)^{i}B_{2i}}{i(i-2)!}, \ i\geq 2
\end{equation}
in agreement with the results obtained in \cite[p.12]{Awon2019}.
\item For $n=3$, the heat coefficients read 
\begin{equation}\label{b_j^3}
b_{j}^{\left( \nu ,3\right) }=\frac{(4\pi)^3}{3!}
\sum\limits_{i=0}^{j}\frac{\left( \frac{9}{4}+\nu ^{2}\right) ^{j-i}}{\left( j-i\right) !}c_{i}^{\left( \nu ,3\right) },\text{ \ }
\end{equation}
where
\begin{eqnarray*}
\begin{split}
& c_{0}^{\left( \nu ,3\right) }=1, \quad c_{1}^{\left( \nu ,3\right) }=-\left(\frac{1}{4}+\nu^2\right),\quad c_{2}^{\left( \nu ,3\right) }=\frac{1}{2}\left(\frac{1}{4}-\nu^2\right)^2,\\
& c_{i}^{\left( \nu ,3\right) }=\sum\limits_{p=0}^{2}\frac{\left( -1\right)^{i-2}c_{2-p}^{\left( \nu ,3\right)}B_{2(p+i-2)}(\nu+1/2)}{(i+p-2)p!(i-3)!}
,\quad i\geq 3
\end{split}
\end{eqnarray*}
If $\nu=0$, then \eqref{b_j^3} becomes
\begin{equation*}
b_{j}^{\left( 0 ,3\right) }=\frac{(4\pi)^3}{3!}
\sum\limits_{i=0}^{j}\frac{\left( \frac{9}{4}\right) ^{j-i}}{\left( j-i\right) !}u_{i}^3 ,\text{ \ }
\end{equation*}
with
\begin{eqnarray*}
\begin{split}
& u_{0}^3=1, \quad u_{1}^3=-\frac{1}{4},\quad u_{2}^3=\frac{1}{32},\\
& u_{i}^3=\sum\limits_{p=0}^{2}\frac{\left( -1\right)^{p}u_{2-p}^3 B_{i+p-3}}{p!(i-3)!}=\frac{1}{2(i-3)!}\left[B_{i-1}+\frac{1}{2}B_{i-2}+\frac{1}{16}B_{i-3}\right],\quad i\geq 3
\end{split}
\end{eqnarray*}
as expected for $n=3$ (see \cite[p.9]{Awon2019}).
\item For $n=4$, the heat coefficients read 
\begin{equation}\label{b_j^4}
b_{j}^{\left( \nu ,4\right) }=\frac{(4\pi)^4}{4!}
\sum\limits_{i=0}^{j}\frac{\left( 1+\nu ^{2}\right) ^{j-i}}{\left( j-i\right) !}c_{i}^{\left( \nu ,4\right) },\text{ \ }
\end{equation}
where
\begin{eqnarray*}
\begin{split}
& c_{0}^{\left( \nu ,4\right) }=1, \quad c_{1}^{\left( \nu ,4\right) }=-\frac{1}{3}\left(\nu+2\right),\quad c_{2}^{\left( \nu ,4\right) }=\frac{1}{6}\left(-\nu^2+2\nu+1\right),\quad c_{3}^{\left( \nu ,4\right) }=\frac{\nu}{6}(\nu^2-1),\\
& c_{i}^{\left( \nu ,4\right) }=\sum\limits_{p=0}^{3}\frac{\left( -1\right)^{i-4}c_{3-p}^{\left( \nu ,4\right)}}{(i+p-3)p!(i-4)!}\left[2 B_{2(p+i-3)}-B_{2(p+i-3)}(\nu) \right],\quad i\geq 4
\end{split}
\end{eqnarray*}
If $\nu=0$, then \eqref{b_j^4} becomes
\begin{equation*}
b_{j}^{\left( 0 ,4\right) }=\frac{(4\pi)^4}{4!}
\sum\limits_{i=0}^{j}\frac{u_{i}^4}{\left( j-i\right) !} ,\text{ \ }
\end{equation*}
with
\begin{eqnarray*}
\begin{split}
& u_{0}^4=1, \quad u_{1}^4=-\frac{2}{3},\quad u_{2}^4=\frac{1}{6},\quad u_{3}^4=0,\\
& u_{i}^3=\sum\limits_{p=0}^{3}\frac{\left( -1\right)^{i-4}u_{3-p}^4}{(i+p-3)p!(i-4)!}B_{2(p+i-3)},\quad i\geq 4
\end{split}
\end{eqnarray*}
as expected for $n=4$ (see \cite[p.13]{Awon2019}).
\end{itemize}

\begin{appendix}
\section{Appendix A}\label{appendix1}
Here we list the basic notations and definitions
of some special functions and orthogonal polynomials we have used in this
paper. For more details on the theory of these functions we refer to 
\cite{Chihara,Szego,Erdelyi,Srivastava3,Appel}.
\begin{enumerate}
\item For $a\in \mathbb{C}$, the shifted factorial or Pochhammer
symbol is defined by 
\begin{equation}
(a)_{k}=a(a+1)\dots (a+k-1),\qquad k\in \mathbb{N},  \label{pochhamer}
\end{equation}%
where by convention $(a)_{0}=1$. When $a=-n$ with $n\in \mathbb{N}^{\ast }=%
\mathbb{N}-\{0\}$, 
\begin{equation}
(-n)_{k}=\left\{ 
\begin{array}{cl}
\frac{(-1)^{k}n!}{(n-k)!},\qquad  & 0\leq k\leq n, \\ 
0,\qquad  & k>n.%
\end{array}%
\right.   \label{Pochhammer-n}
\end{equation}
\item For $a\in \mathbb{C}$ and $k\in \mathbb{N}$, the binomial
coefficient is defined by%
\begin{equation*}
\begin{pmatrix}
\alpha  \\ 
s%
\end{pmatrix}%
:=\alpha (\alpha -1)\cdots (\alpha -s+1)/s!\text{ \ if \ }s\in \mathbb{Z}%
_{+}\setminus \{0\}\text{ and }%
\begin{pmatrix}
\alpha  \\ 
0%
\end{pmatrix}%
:=1,\text{ for all }\alpha \in \mathbb{R}
\end{equation*}
\begin{equation}\label{binom}
\binom{a}{k}=\frac{a(a-1)\dots (a-k+1)}{k!}=\frac{(-1)^{k}(-a)_{k}}{k!}.
\end{equation}

\item For $\xi\in \mathbb{C}$, the gamma function is defined by 
\begin{equation}
\Gamma (\xi)=\int\limits_{0}^{\infty }t^{\xi-1}e^{-t}dt,\qquad \mathrm{Re}\,\xi>0.
\label{gamma}
\end{equation}%
Note that $\Gamma (n+1)=n!$ if $n\in \mathbb{N}$. The shifted factorial or Pochhammer symbol is defined by
\begin{equation}
(a)_{k}= a(a+1)\cdots (a+k-1),  \label{PochhammerGG}
\end{equation}%
and $(a)_k=\frac{\Gamma (a+k)}{\Gamma (a)}$ if $a\in \mathbb{C}\setminus \mathbb{Z}_{-}$.

\item For $a,b\in \mathbb{C}$ such that $\mbox{Re}\,a,\mbox{Re}\,b>0$,
the beta function is defined by 
\begin{equation}
\mathcal{B}(a,b)=\int\limits_{0}^{1}t^{a-1}(1-t)^{b-1}dt=\frac{\Gamma
(a)\Gamma (b)}{\Gamma (a+b)}.  \label{Beta}
\end{equation}

\item For $a_{1},\dots ,a_{p}\in \mathbb{C}$ and $c_{1},\dots
,c_{q}\in \mathbb{C}\setminus \mathbb{Z}_{-}$, the generalized
hypergeometric function is defined by the series 
\begin{equation}
{}_p F_q\left(\begin{matrix}a_{1},\cdots ,a_{p}\\c_{1},\cdots,c_{q}\end{matrix}\bigg| \xi\right)=\sum_{k=0}^{+\infty }\frac{(a_{1})_{k}\cdots (a_{p})_{k}}{%
(c_{1})_{k}\cdots (c_{q})_{k}}\frac{\xi^{k}}{k!},  \label{hypergeometric}
\end{equation}%
which terminates whenever at least one of the $p$ parameters $a_{i}$ equals $%
-1,-2,-3,\dots $. It converges for $|\xi|<\infty $ if $p\leq q$ or for $|\xi|<1$
if $p=q+1$ and it diverges for all $\xi\neq 0$ if $p>q+1$. Special cases of
this function are the Gauss hypergeometric function ${}_2 F_1\left(\begin{matrix}a, b\\ c\end{matrix}\bigg| \xi\right)$, the confluent hypergeometric function ${}_1 F_1\left(\begin{matrix}a\\ c\end{matrix}\bigg| \xi\right)$ and the binomial series ${}_1 F_0\left(\begin{matrix}a\\ -\end{matrix}\bigg| \xi\right)$. The later one reduces to $(1-\xi)^{-a}$ if $|\xi|<1$.\newline
is called the modified Bessel function of the first kind and order $a$. 
\item The Pfaff transformation Pfaff-Kummer
transformation \cite[p.33(19)]{Sr-Ma1984}: 
\begin{equation}\label{Pfaff_trans}
{}_{2}F{}_{1}\left( 
\begin{array}{c}
a,\,b \\ 
c%
\end{array}%
\left\vert \xi \right. \right) =(1-\xi )^{-a}{}_{2}F{}_{1}\left( 
\begin{array}{c}
a,\,c-b \\ 
c%
\end{array}%
\left\vert \frac{\xi }{1-\xi }\right. \right) ,\quad |\arg (1-\xi)|<\pi ,
\end{equation}
\item The Jacobi polynomial of parameters $a$ and $b$ is defined by 
\begin{equation}
P_{k}^{(a,b)}(x)=2^{-k}\sum_{j=0}^{k}\binom{k+a}{j}\binom{k+b}{k-j}%
(x+1)^{j}(x-1)^{k-j},\qquad a,b>-1.  \label{jacobi}
\end{equation}
In particular
\begin{equation}
P_{k}^{(a,b)}(1)=\frac{(a+1)_k}{k!}.
\end{equation}
The Jacobi polynomials can be expressed in terms of terminating ${}_{2}F_{1}$-series as 
\begin{equation}\label{Jacobi_Poly}
P_{k}^{(a,b)}(x)=\frac{\Gamma (b+k+1)}{k!\Gamma (b+1)}\left( \frac{x-1}{2}\right) ^{k}{}_2 F_1\left(\begin{matrix}-k,-a-k\\ b+1\end{matrix}\bigg| \frac{x+1}{x-1}\right), 
\end{equation}%
or 
\begin{equation}\label{2F1_Jacobi}
{}_{2}F{}_{1}\left( 
\begin{array}{c}
-k,\,b \\ 
c%
\end{array}%
\left\vert \xi \right. \right) =\frac{k!}{(c)_{k}}P_{k}^{(c-1,b-c-k)}(1-2\xi).
\end{equation}
\item The normalized Jacobi polynomials are defined by
\begin{equation*}
R_{k}^{(\alpha ,\beta )}(u):=\frac{P_{k}^{(\alpha ,\beta )}(u)}{%
P_{k}^{\alpha ,\beta }(1)}.
\end{equation*}
By \eqref{Pfaff_trans} and \eqref{2F1_Jacobi}, we get 
\begin{equation}\label{2F1_norm_Jacobi}
R_{k}^{(\alpha ,\beta )}(u)=\left( \frac{1+u}{2}\right)
^{k}{}_{2}F{}_{1}\left( 
\begin{array}{c}
-k,-k-\beta  \\ 
\alpha +1%
\end{array}%
\big|\frac{1-u}{1+u}\right) .
\end{equation}
\item The disk polynomials that were first studied by Zernike and
Brinkman \cite{Zernike}, \cite{Koor_1976} are given by 
\begin{equation}\label{Zernike_poly}
R_{p,q}^{\gamma }(\xi ):=|\xi |^{|p-q|}e^{i(p-q)\arg \xi }R_{\min
(p,q)}^{(\gamma ,|p-q|)}(2|\xi |^{2}-1).
\end{equation}
From \eqref{Zernike_poly} and \eqref{2F1_norm_Jacobi} by using the relation $%
\max (s,t)=|s-t|+\min (s,t)$, we can check easily that
\begin{equation}\label{2F1-R}
{}_{2}F{}_{1}\left( 
\begin{array}{c}
-s,-t \\ 
\gamma +1%
\end{array}%
\big|y\right) =(1-y)^{\frac{s+t}{2}}R_{s,t}^{\gamma }\left( (1-y)^{-\frac{1}{%
2}}\right) .
\end{equation}
\item Bernoulli number $B_d$ is defined as the coefficient of $x^d/d!$ in the series expansion of $x(e^x-1)^{-1}$, and the Bernoulli polynomials $B_d(z)$ are defined by $B_d(z)=\sum_{k=0}^{d}\binom{d}{k}B_kz^{d-k}$.
\end{enumerate}

\end{appendix}

%
%
%

\bigskip
${}^{*}$ Department of Mathematics, Faculty of
Sciences, Ibn Zohr University,\\ P.O. Box. 8106, Agadir, Morocco (\emph{ahbli.khalid@gmail.com}).\\
\\
$\sharp$ Centre R\'egional des M\'etiers de l'Education et de la Formation de K\'enitra, Morocco\\
(\emph{ hafoudaliali@gmail.com})\\
\\
${}^{\flat}$ Department of Mathematics, Faculty of Sciences and Technics (M'Ghila),\\ P.O. Box. 523, B\'{e}ni Mellal, Morocco (\emph{mouayn@gmail.com}).


\begin{thebibliography}{99}
\bibitem{MP1949} Minakshisundaram S, Pleijel A. Some properties of the eigenfunctions of the Laplace operator on Riemannian manifolds. \textit{Canad J Math}. \textbf{1} (1949), 242–256.

\bibitem{AAR} G. E. Andrews, R. Askey, R. Roy, Special functions. \textit{%
Cambridge University Press}. 1999.







\bibitem{BCC} G. Besson, B. Colbois, G. Courtois. Sur la multiplicit\'{e} de
la premi\`{e}re valeur propre de l'op\'{e}rateur de Schr\"{o}dinger avec
champ magn\'{e}tique sur la sph\`{e}re $S^{2}$. \textit{Trans. Amer. Math.
Soc}. \textbf{Vol 350}, Number 1, (1998), 331-345.





\bibitem{Dun} G. V. Dunne, Hilbert space for charged particles in
perpendicular magnetic field, \textit{Ann. Phys}. \textbf{215} (1992),
233-263.


\bibitem{Erd} A. Erdelyi, W. Magnus, F. Oberhettinger, F. G. Tricomi. Higher
transcendental functions. \textbf{Vol. I.} \textit{McGraw-Hill Book Company,
Inc., New York-Toronto-London}, (1953). xvii+396 pp.

\bibitem{Po02} Polterovich I. Combinatorics of the heat trace on spheres. Canad J Math. 2002;54:1086–1099.


\bibitem{CoVZe} Cognola G, Vanzo L, Zerbini S. A new algorithm for asymptotic heat kernel expansion for manifolds with boundary. Phys Lett B. 1990;241:381-386.

\bibitem{BrGi} Branson TP, Gilkey PB, Orsted B. Leading terms in the heat invariants. Proc Amer Math Soc. 1990;109:437-450.

\bibitem{BoKl} M. Bordag, G. L. Klimchitskaya, U. Mohideen, and V. M. Vostepanenko, Advances in the Casimir Effect, International Series of Monographs on Physics Vol. 145 (Oxford University Press, 2009).

\bibitem{Ki02} K. Kirsten, Spectral Functions in Mathematical Physics (Chapman and Hall/CRC, Boca Raton, FL, 2002).

\bibitem{Av20}I. G. Avramidi, HeatKernel and Quantum Gravity, Lecture Notes in Physics, Series Monographs, LNP: 64 (Springer-Verlag,Berlin, 2000).

\bibitem{Byt99} Bytsenko AA,Williams FL. Asymptotics of the heat kernel on rank-1 locally symmetric spaces.  J Phys A: Math Gen. 1999;32:5773-5779.

\bibitem{Aw2019} R. O. Awonusika  Generalised heat coefficients and associated spectral zeta functions on complex projective spaces $\textbf{P}^n(\mathbb{C)}$, Complex Variables and Elliptic Equations, 65:4, 588-620, (2019).








\bibitem{Koo} T. H. Koornwinder, The addition formula for Jacobi
polynomials, 2: the Laplace type integral representation and the product
formula. \textit{Stichting Mathematisch Centrum}. Toegepaste Wiskunde TW
133/72 (1972).


\bibitem{Mou} Z. Mouayn. Coherent states attached to the spectrum of the
Bochner Laplacian for the Hopf fibration. \textit{J. Geom. Phys}. \textbf{59}
(2009), no. 2, 256-261.


\bibitem{SW} Simms, D.J., Woodhouse, N.M., Lectures on Geometric
Quantization. Lectures Notes in Physics, vol. 53. Springer-Verlag, Berlin
(1976).


\bibitem{Pee-Zha} J. Peetre, G. Zhang. Harmonic analysis on the quantized
Riemann sphere. \textit{Inter. J. Math and Math. Sci}. \textbf{16}, no. 2.
(1993), 225-243.

\bibitem{R} W. Rudin. Function theory in the unit ball of $\mathbb{C}^{n}$. 
\textit{Springer, New York} (1980).


%

\bibitem{FV} E. V. Ferapontov, A.P. Veselov: Integrable Schrodinger
operators with magnetic fields: factorization method on curved surfaces. 
\textit{J. Math. Phys. }\textbf{42}, 590-607 (2001)

\bibitem{WY} T. T. \bigskip Wu, C.N. Yang, Dirac monopole without strings:
monopole harmonics. \textit{Nucl. Phys B. }\textbf{107} (1976), 364-380.

\bibitem{Sh} Shnir, Y.M.: Magnetic Monopoles, Texts and Monographs in
Physics. Springer, Berlin, Heidelberg (2005)

\bibitem{Hir} F. Hirzebruch, Topological Methods in Algebraic Geometry'.
131, Grundlehren der mathematischen Wissenschaften. Sringer, London (1978)

\bibitem{GRS} Grosse, H., Rupp, C.W. and Strohmaier A. Fuzzy line bundles,
the Chern character and topological charges over the fuzzy sphere. J. Geom.
Phys. \textbf{42} (2002), 54-63.

\bibitem{Mo} Z. Mouayn, Discrete Bargmann Transforms Attached to Landau
Levels on the Riemann Sphere,\textit{\ Ann. Henri Poincar\'{e}}, \textbf{16 }%
(2015), 641--650

\bibitem{Ism} Ismail, M.E.H.: Classical and Quantum Orthogonal Polynomials
in One Variable. Encyclopedia of Mathematics and its Applications. Cambridge
university press, Cambridge (2005)

%
%

\bibitem{Szafraniec} Szafraniec FH. The reproducing kernel property and its
space: the basics, in Operator Theory. Vol I, Alpay D. (ed.), 3-30, (Berlin:
Springer Reference; 2015).

\bibitem{Saitoh} Saitoh S. Theory of Reproducing Kernels and Its
Applications. In: Pitman Research Notes in Mathematics Series, vol. 189
(Harlow, New York: Longman Scientific \& Technical/Wiley; 1988)

\bibitem{Aronszajn} Aronszajn N. Theory of reproducing kernels. \textit{%
Trans Amer Math Soc}. \textbf{68} (1950) 337-404

\bibitem{Koorn1972} T. H. Koornwinder, " The addition formula for Jacobi
polynomials, III. Completion of the proof", Math. Centrum Amsterdam, Report
TW 135 (1972).

\bibitem{Sapiro1968} R. L. \v{S}apiro, ``Special functions related to
representations of the group SU(n), of class I with respect to $SU(n-1)\ (n
\geq 3)$", (Russian), Izv. Vys\v{s}. U\v{c}eb. Zaved. Matematika (1968), no.
4 (71), 97-107.

\bibitem{HI_2005} A. Hafoud and A. Intissar. Reproducing Kernels of
Eigenspaces of a Family of Magnetic Laplacians on Complex Projective Spaces $%
P^n(\mathbb{C})$ and their Heat Kernels. African Journal Of Mathematical
Physics Volume 2 Number 2 (2005)143-153

\bibitem{Sr-Ma1984} Srivastava HM and Manocha HL. A Treatise on Generating
Functions. (London: Ellis Horwood Limited; 1984)

\bibitem{MOS_1966} Magnus,W., Oberhettinger, F., Soni, R.P.: Formulas and
Theorems for the Special Functions of Mathematical Physics, 3rd edn. Die
Grundlehren der mathematischen Wissenschaften, vol. 52. Springer, New York
(1966)

\bibitem{Zernike} Zernike, F., Brinkman, H.C.: Hyperspharische Funktionen
und die in spharischen Bereichen orthogonalen Polynome. Proc. Kon. Akad. v.
wet., Amterdam 38, 161\^{a}\euro ``170 (1935)

\bibitem{Koor_1976} Koornwinder, T.H.: The addition formula for Jacobi
polynomials II. The Laplace type integral representation and the product
formula. Math. Centrum Amsterdam, Report TW 133 (1976)

\bibitem{HI2002} Hafoud A., Intissar A., Repr\'esentation int\'egrale du
noyau de la chaleur sur l'espace projectif complexe. C. R. Acad. Sci. Paris.
Ser 1 335 (2002) 871-876

\bibitem{DK1971} Dijksma A., Koornwinder T.H., Spherical harmonics and the
product of two Jacobi polynomials. Nederl. Akad. Wetensch. Proc. Ser. A
74=Indag. Math. 33 (1971), 191\^{a}\euro ``196

\bibitem{Koor_1972} T. H. Koornwinder, The addition formula for Jacobi
polynomials. II, Math. Centrum Amsterdam Afd. Toegep. Wisk. Report TW 133,
1972.

\bibitem{M_1928} Mulholland H. An asymptotic expansion for $\sum (2n+1)e^{-\sigma (n+1/2)^2}$ . Proc Camb Phil Soc. 1928;\textbf{24}: 280-289

\bibitem{Szafraniec} Szafraniec FH. The reproducing kernel property and its
space: the basics, in Operator Theory. Vol I, Alpay D. (ed.), 3-30, (Berlin:
Springer Reference; 2015)

\bibitem{CW_1976} Cahn RS, Wolf JA. Zeta functions and their asymptotic expansions for compact symmetric
spaces of rank one. Comm Math Helv. 1976;51:1-21.

\bibitem{Awon2019} R.O. Awonusika, Generalised heat coefficients and associated spectral zeta functions on complex projective
spaces $\textbf{P}^n(\mathbb{C})$, Complex Variables and Elliptic Equations, 2019.

\bibitem{prud1} Prudnikov AP, Brychkov Yu A, Marichev OI. Elementary
Functions. vol. 1, Integrals and Series. Amsterdam: Gordon and Breach
Science Publishers; 1986.

\bibitem{Da1989} Davies, E. B., Heat Kernels and Spectral Theory. Cambridge Tracts in Mathematics, 92.
Cambridge University Press, Cambridge, 1989.

\bibitem{Mi1953} Minakshisundaram S. Eigenfunctions on Riemannian manifolds. J. Indian Math. Soc. 17, (1953) 158-165.

\bibitem{BGM1971} Berger, Marcel; Gauduchon, Paul; Mazet, Edmond : Le spectre d'une vari\'et\'e riemannienne. Lecture Notes in Mathematics, Vol. 194 Springer-Verlag, Berlin-New York 1971

\bibitem{Chihara} T. S. Chihara: {\em An introduction to orthogonal polynomials}, Courier Corporation  2011.

\bibitem{Erdelyi} A. Erdelyi: {\em Higher transcendental functions Vol. 1-2}, McGraw-Hill Book Company 1953.

\bibitem{Srivastava3} H. M. Srivastava and H. L. Manocha: {\em A treatise on generating functions}, {Ellis Horwood Limited} 1984.

\bibitem{Szego} G. Szego: {\em Orthogonal polynomials}, American Mathematical Society Province, Rhode Island 1939.

\bibitem{Appel} P. Appel and J. Kamp\'e de F\'eriet: {\em Fonctions hyperg\'eom\'etriques et hypersph\'eriques. Polyn\^omes d'Hermite}, Paris: Gauthier-Villars 1926.

\bibitem{H_2002} A. Hafoud, Analyse spectrale concr\`ete d'une famille de d\'eformations du Laplacien de Fubini-Study sur l'espace projectif complex $\textbf{P}^n(\mathbb{C})$, Th\`ese de Doctorat Universit\'e Mohammed V-Agdal, Facult\'e des sciences, Rabat, Maroc (2002)

\bibitem{DMY} N. Demni, Z. Mouayn, H. Yaqine. Berezin transforms attached to Landau levels on the complex projective space $\textbf{P}^n(\mathbb{C})$. J. Math. Phys. Geom. Vol 17(4), pp. 422-440 (2021).

\end{thebibliography}
\end{document}